\title{Mitigating Errors in Local Fermionic Encodings}
\author[1,3]{Johannes Bausch}
\author[1]{Toby Cubitt}
\author[1,2]{Charles Derby}
\author[1]{Joel Klassen\footnote{Authors listed alphabetically}}
\affil[1]{Phasecraft Ltd.}
\affil[2]{Department of Computer Science, University College London}
\affil[3]{Department of Applied Mathematics and Theoretical Physics, University of Cambridge}
\date{March 2020}
\let\latextextsuperscript\textsuperscript
\definecolor{color1}{RGB}{230,57,70}
\definecolor{color2}{RGB}{29,53,87}
\definecolor{color3}{RGB}{69,123,157}
\crefname{lemma}{Lemma}{Lemmas}
\crefname{proposition}{Proposition}{Propositions}
\crefname{definition}{Definition}{Definitions}
\crefname{theorem}{Theorem}{Theorems}
\crefname{conjecture}{Conjecture}{Conjectures}
\crefname{corollary}{Corollary}{Corollaries}
\crefname{example}{Example}{Examples}
\crefname{section}{Section}{Sections}
\crefname{appendix}{Appendix}{Appendices}
\crefname{figure}{Fig.}{Figs.}
\crefname{equation}{Eq.}{Eqs.}
\crefname{table}{Table}{Tables}
\crefname{item}{Property}{Properties}
\crefname{remark}{Remark}{Remarks}
\newtheorem{theorem}{Theorem}
\newtheorem{proposition}[theorem]{Proposition}
\renewcommand\paragraph[1]{\textit{\textbf{#1}}}
\DeclareMathOperator{\BigO}{O}
\newcommand\HFH{H_\mathrm{FH}}
\newcommand\ii{\mathrm i}
\newcommand\ee{\mathrm e}
\newcommand\1{\mathds 1}
\newcommand\DK{LW}
\newcommand\DKfull{Low-Weight}
\DeclareDocumentCommand{\c}{ s }{%
    \IfBooleanTF{#1}{\overline}{}\gamma%
}
\DeclareDocumentCommand{\d}{ s }{%
    \IfBooleanTF{#1}{\overline}{}\mu%
}
\begin{document}

\maketitle
\thispagestyle{empty}
\enlargethispage{3cm}

\begin{abstract}
    Quantum simulations of fermionic many-body systems crucially rely on mappings from indistinguishable fermions to distinguishable qubits. The non-local structure of fermionic Fock space necessitates encodings that either map local fermionic operators to non-local qubit operators, or encode the fermionic representation in a long-range entangled code space.
    In this latter case, there is an unavoidable trade-off between two desirable properties of the encoding: low weight representations of local fermionic operators, and a high distance code space.
    
Here it is argued that despite this fundamental limitation, fermionic encodings with low-weight representations of local fermionic operators can still exhibit error mitigating properties which can serve a similar role to that played by high code distances. In particular when undetectable errors correspond to ``natural'' fermionic noise. We illustrate this point explicitly for two fermionic encodings: the Verstraete-Cirac encoding, and an encoding appearing in concurrent work by~\citeauthor{derbyKlassen2020}.

 In these encodings many, but not all, single-qubit errors can be detected. However we show that the remaining undetectable single-qubit errors map to local, low-weight fermionic phase noise. We argue that such noise is natural for fermionic lattice models.
This suggests that even when employing low-weight fermionic encodings, error rates can be suppressed in a similar fashion to high distance codes, provided one is willing to accept simulated natural fermionic noise in their simulated fermionic system. 
\end{abstract}
\section{Introduction and Summary of Results}
One of the most promising prospective applications of near term quantum computers is the simulation of physical systems.
Fermionic systems are of particular interest, due to their prevalence in quantum chemistry and materials science.
On conventional qubit-based architectures a necessary ingredient in these simulations is a method of encoding fermionic systems into the qubits of the quantum computer: a ``fermion to qubit encoding''.
It can often be advantageous to employ an encoding which represents commonly used fermionic operators as low-weight and geometrically local qubit operators. This is true for example in quantum time dynamics simulation, where higher weight Hamiltonian terms give rise to higher depth circuits. Fermionic operators that appear commonly in fermionic many-body Hamiltonians include hopping terms and number operators.

Many fermion to qubit encodings have been designed with this in mind~\cite{Verstraete2005a,Bravyi2002,setia2019superfast,steudtner2019square,jiang2018majorana}.
A fundamental obstacle in the design of these encodings is that, unlike a multi-qubit state space, which can be decomposed into the product of local state spaces, the fermionic state space inherently has a non-local structure; exchanging any two fermions introduces a relative phase of $\pi$.
Consquently, any encoding which aims to represent fermionic operators by local qubit operators must invariably resort to representing states in a delocalized fashion, e.g.\ in a highly entangled subspace of a multi-qubit system.

Stabilizer codes~\cite[Chapter~10.5]{nielsen_chuang_2010} provide a convenient way of constructing highly entagled subspaces which are nonetheless amenable to mathematical analysis. All known local fermionic encodings take the form of stabilizer codes \cite{Bravyi2002,Verstraete2005a,setia2019superfast,jiang2018majorana,steudtner2019square}. A stabilizer code takes an $N$ qubit system and restricts it to the simultaneous +1 eigenspace, called the code space, of a mutually commuting group of Pauli operators called the stabilizer group. The dimension of the code space is $2^{N-d}$ where $d$ is the size of the generating set of the stabilizer group. Operators that commute with elements of the stabilizer group will preserve the code space and are referred to as logical operators. Operators that do not commute take states out of the code space and thus can be detected by stabilizer measurements. Thus these operators are called detectable. If the stabilizer measurements yield a unique signature, then the operators are called correctable. For the purposes of reducing error rates it is preferable to engineer a code to detect or correct as many error operators as possible. For instance, there exists a generalised superfast encoding which has been shown to be capable of correcting all single-qubit errors, provided the system to simulate has an interaction graph of large-enough degree \cite{setia2019superfast}. It is therefore natural to consider to what extent the existing local fermionic encodings can correct or detect physical qubit errors.

However, good error correcting or detecting codes must have high weight logical operators, so that logical operators do not occur as errors. This is fundamentally at odds with the goal of having low-weight encoded fermionic operators. In this work we argue that, despite the apparent conflict between low-weight fermionic operators and error correction and detection, there can exist valuable error mitigating properties of fermionic encodings that do \emph{not} need to be sacrificed in the pursuit of low-weight fermionic operators. In particular, in the context of fermionic simulation of natural systems, one might tolerate---or even desire---some noise in the physical qubits, provided that this noise translates into ``natural'' fermionic noise in the simulated fermionic system. This will depend crucially on the choice of encoding, and on what fermionic operators the low-weight undetectable errors correspond to in the logical fermionic space.

This point has been made in a more general context in~\cite{Cubitt2017}. There, the authors prove that for any local Hamiltonian simulations, as defined rigorously in that paper, local physical noise in the simulator system corresponds to local noise in the system being simulated. They also show that such local Hamiltonian simulations can indeed be constructed; in fact, they show there exist simple, universal quantum Hamiltonians that are able to simulate \emph{any} target Hamiltonian, to arbitrary precision. The general theoretical results of~\cite{Cubitt2017} do not address specific natural noise models, nor do they address fermionic systems.

In this work, we demonstrate how this mapping of noise models appears in specific fermionic encodings of interest, for specific families of local noise models.
In particular, we consider two fermionic encodings: the Verstraete Cirac (VC) encoding ~\cite{Verstraete2005a}, and an encoding presented concurrently by two of the authors of this work ~\cite{derbyKlassen2020}, which we refer to as the \DKfull{} (\DK{}) encoding.
These encodings are particularly well-suited to simulating fermionic lattice models, and give rise to low-weight terms for Hamiltonians which are local on these lattices, such as the Fermi-Hubbard model.

For concreteness, we consider independent and identically-distributed (iid) local qubit noise on the qubits in the quantum computer, and study the effect of this on fermionic simulations. Though this is a simplistic---albeit commonly-used---theoretical model of noise in real quantum computers, the salient features of our results depend only on the locality of the noise model, rather than the specific form of the noise operators. Moreover, this simplistic model is a surprisingly good match to the noise observed in current hardware~\cite{Google_supremacy}.

We first note that for both the VC and \DK{} encodings, many weight-1 errors are detectable, i.e. do not commute with stabilizers.

We then show that almost all of the remaining \emph{un}detectable errors translate into natural local noise in the fermionic system, and we present strategies for mitigating the edge cases where they do not.
In particular, we show that for the VC encoding, almost all weight-1 Pauli errors correspond to local fermionic phase noise, except for one instance of a weight-1 error on a corner qubit, which corresponds to a single Majorana error.
For the \DK{} encoding, we show that almost all weight-1 Pauli errors also correspond to local fermionic phase noise, except for cases where again weight-1 errors at corner qubits can give rise to a single Majorana error.
Furthermore, we explore the potential for mitigating weight-2 errors in a similar fashion.

The first-order Majorana edge cases can give rise to unphysical fermionic states which violate parity superselection.
We consider two strategies for mitigating these errors.
One strategy for the \DK{} encoding is to introduce a slight modification to the lattice and the encoding at the corners, which results in single Majorana errors at the corners corresponding to weight-2 Pauli errors.
The other strategy is to employ global fermionic parity check stabilizer---which may be cheap to implement e.g.\ if stabilizers are only measured once at the end of the computation.

Low Pauli-weight operators and noise resilience are both features of fermionic encodings of particular relevance for near term quantum algorithms, where quantum circuits are run for a short period of time and error correction is unavailable; a faithful execution can thus only be guaranteed if the number of errors in that time period is $\ll 1$.
We expand on this point in the discussion.

\section{Preliminaries}\label{sec:preliminaries}

\subsection{Fermionic Lattice Models}\label{sec:lattice-models}
In nature, fermions are typically embodied as particles with continuous positions and momenta.
As such, the quantum states of fermionic systems admit their most general representations as functions over continuous degrees of freedom.
However, under many circumstances there may exist a discrete set of approximately or exactly orthogonal functions over which the natural fermionic wavefunctions admit a convenient decomposition.
This is true for example in the case of the atomic orbitals of atoms in a molecule or a crystal. One might also choose to approximately discretize a continuous coordinate system, in the hopes of retrieving an effective model that still captures essential features of the system under study.
It can thus be sufficient to represent fermionic systems by a discrete set of modes (or sites), and by interactions between these modes.
It is here where fermion-to-qubit encodings become relevant: they map a discrete set of fermionic modes to a discrete set of qubits.

The encodings we consider are particularly well-suited to lattice models, in which the fermionic modes are arranged in a lattice configuration, and the interactions are short range with respect to the lattice geometry.
These models are commonly found in materials science and condensed matter physics.
The archetypal toy lattice model for fermionic systems is the Fermi-Hubbard model, which typically lives on a square lattice. In its basic form, the Fermi-Hubbard model describes spin-1/2 fermions hopping on a square lattice, with an on-site interaction between two fermions (of opposite spin) occupying the same lattice siate. The Fermi-Hubbard Hamiltonian is given by
\begin{align}
    \HFH  =
    &-\sum_{\langle j,k\rangle}\sum_{\sigma\in\{\uparrow,\downarrow\}} t_{jk}\left(a^\dagger_{j,\sigma}a_{k,\sigma}+a^\dagger_{k,\sigma}a_{j,\sigma}\right)  \label{eq:FH-hopping}\\
    &+U\sum_j n_{j,\uparrow}n_{j,\downarrow}  \label{eq:FH-onsite}\\
    &+\sum_j \sum_{\sigma\in\{\uparrow,\downarrow\}} (\epsilon_j-\mu_j)n_{j,\sigma} \\
    &-\sum_{j} h_j(n_{j,\uparrow}-n_{j,\downarrow}),
    \label{eq:FH-external}
\end{align}
where $\langle j,k\rangle$ denotes adjacent sites on the lattice, with corresponding Dirac fermion creation and annihilation operators $\smash{a^\dagger_{i,\sigma},a_{i,\sigma}}$, and the fermion number (or density) operator $n_{j,\sigma}\coloneqq \smash{a_{j,\sigma}^\dagger a_{j,\sigma}}$ for a spin $\sigma\in\{\uparrow,\downarrow\}$.
The first term (\cref{eq:FH-hopping}) describes a nearest-neighbour hopping term with coupling strengths $t_{jk}$;
the second (\cref{eq:FH-onsite}) describes the on-site Coulomb repulsion between particles of different spin, the third and fourth terms are the energy from local potentials and magnetic fields, respectively.
Each site can either be occupied by a fermion, with corresponding odd parity---or be unoccupied, corresponding to an even parity.

In addition to fermionic creation and annihilation operators, it can often be convenient to consider Majorana operators, defined as
\begin{equation}\label{eq:majorana-identities}
    \c_j \coloneqq  a_{j}+a^\dagger_{j} \quad\text{and}\quad \c*_j \coloneqq  -\ii \left(a_{j}-a^\dagger_{j}\right).
\end{equation}
The Majorana operators are self-inverse and mutually anticommute. Application of one of the Majorana operators flips a state's fermion parity.

\subsection{The Verstraete-Cirac Encoding}
In order to simulate a fermionic Hamiltonian, such as the Fermi-Hubbard Hamiltonian, with qubits and qubit couplings, one requires a map between the fermion and qubit operators, also called an encoding.
Such an encoding will respect the commutation relations of the operators and allow the identification of fermionic states with qubit configurations. One encoding we examine is the Verstraete-Cirac encoding~\cite{Verstraete2005a} which we will briefly review here.

The Verstraete-Cirac (VC) encoding constitutes a stabilizer code, with the code space representing the fermionic Fock space. The VC encoding makes use of the Jordan-Wigner (JW) transform~\cite{Jordan1928,nielsen2005fermionic}, an encoding in which an ordering is imposed on the fermionic sites, and Majorana operators $\c_j$ and $\c*_j$ are encoded as qubit operators which include long strings of $Z$ Paulis:
\begin{equation}
 \tilde{\c}_j = \left(\prod_{i<j} Z_i \right)X_j \;,\; \tilde{\c*}_j = \left(\prod_{i<j} Z_i \right)Y_j .
\end{equation} 
  Throughout we will denote encoded fermionic operators with a tilde above the symbol. These strings appear most prominently for local operators acting on sites which are local according the lattice geometry, but non-consecutive according to the JW ordering.

More concretely, consider a fermionic lattice graph $G=(\mathsf{V}, \mathsf{E})$, with the vertices in $\mathsf{V}$, indexed by Latin indices $i,j,k$ corresponding to a JW ordering.
The JW ordering is conventionally chosen to be a Hamiltonian path over the primary sites, so that $i=j+1$ implies $(i,j) \in \mathsf{E}$. Consider a pair of sites $i<k \in \mathsf{V}$ which share an edge $(i,k) \in \mathsf{E}$ but which are not consecutive in the ordering (ie  $k\neq i+1$). In the JW encoding, local fermionic terms acting on such pairs of sites will exhibit long range strings. For example the hopping term is encoded as
\begin{equation}\label{eq:hopping}
\tilde{a}_{k}^\dagger \tilde{a}_{i} + \tilde{a}_{i}^\dagger \tilde{a}_{k}  =\frac{\ii}{2} ( \tilde{\c}_i \tilde{\c*}_k- \tilde{\c*}_i \tilde{\c}_k ) =  \frac{1}{2}\left(Y_{i}Y_{k}+X_{i}X_{k}\right)\left(\prod_{j=i+1}^{k-1}Z_j\right).
\end{equation}

 The principle behind the VC encoding is to avoid these long strings of $Z$ operators by introducing additional auxiliary fermionic sites (indexed in the ordering by $j'$) for every primary fermionic site $j$, and choosing a JW ordering which alternates between the primary and auxiliary fermionic sites, so that $j<j'<j+1$. All fermions are then encoded according to the JW transform. For clarity we will denote primary Majoranas by $\c_j$ and auxiliary Majoranas by $\d_{j'}$. Thus the encoded primary Majoranas are
 \begin{equation}
  \tilde{\c}_j = \left(\prod_{i<j} Z_i Z_{i'} \right)X_j \quad\text{and}\quad \tilde{\c*}_j = \left(\prod_{i<j} Z_i Z_{i'}\right)Y_j,
 \end{equation}
 and the encoded auxiliary Majoranas are
 \begin{equation}
  \tilde{\d}_j = \left(\prod_{i<j} Z_i Z_{i'} \right)Z_jX_{j'} \quad\text{and}\quad \tilde{\d*}_j = \left(\prod_{i<j} Z_i Z_{i'}\right)Z_jY_{j'}.
 \end{equation}

The stabilizers of the encoding are then defined as products of pairs of encoded Majoranas on the auxiliary sites (for example $\ii\tilde{\d}_{i'} \tilde{\d}_{k'}$ or $\ii\tilde{\d}_{i'} \tilde{\d*}_{k'}$ etc.\footnote{The factors of $\ii$ ensure they are valid stabilizers with $\pm 1$ eigenvalues, see \Cref{eq:VC_aux_mappings}.}), such that every auxiliary Majorana appears in exactly one such pairing. In this way the code space consists only of the primary fermionic degrees of freedom. Furthermore, these pairs are chosen such that for every edge $(i,k) \in \mathsf{E}$ for which $i$ and $k$ are not consecutive in the JW ordering, there is a stabilizer corresponding to a pair of Majoranas from the associated auxiliary sites, for example $\ii\tilde{\d}_{i'} \tilde{\d}_{k'}$.  $G$ must have max degree $4$ in order for this to be possible. See  \Cref{fig:Maj pairings} for how such a pairing can be chosen on a square lattice.

The motivation behind this choice of stabilizers is that the strings of $Z$ operators appearing in local fermionic operators on the primary sites can be cancelled by the strings of $Z$ operators appearing in the stabilizers, yielding a logically equivalent operator which does not have long range strings. For example in the VC encoding the hopping term in \Cref{eq:hopping} would take the modified form:
\begin{equation}
\tilde{a}_{k}^\dagger \tilde{a}_{i} + \tilde{a}_{i}^\dagger \tilde{a}_{k}= \frac{1}{2}\left(Y_{i}Y_{k}+X_{i}X_{k}\right)Z_{i'}\left(\prod_{j=i+1}^{k-1}Z_jZ_{j'}\right)
\end{equation}
However by multiplying this operator by, for example, the stabilizer operator
\begin{equation}
\ii\tilde{\d}_{i'} \tilde{\d}_{k'} = Y_{i'} \left(\prod_{j=i+1}^{k-1}Z_jZ_{j'}\right) Z_k X_{k'}
\end{equation}
one retrieves a logically equivalent operator that is local in the lattice geometry , i.e.
\begin{equation}
\ii\tilde{\d}_{i'} \tilde{\d}_{k'}\left(\tilde{a}_{k}^\dagger \tilde{a}_{i} + \tilde{a}_{i}^\dagger \tilde{a}_{k}\right)= \frac{1}{2}\left( Y_{i}X_{k}-X_{i}Y_{k}\right)X_{i'}X_{k'}.
\end{equation}

For ease of reference we include here how each auxiliary Majorana pairing is encoded.
If we assume $i<j$, then
\begin{equation}\label{eq:VC_aux_mappings}
\begin{aligned}[]
    \ii\tilde{\d}_i\tilde{\d}_j&= Y_{i'}\left(\prod_{k=i+1}^{j-1} Z_k Z_{k'}\right)Z_jX_{j'}, & &
    -\ii\tilde{\d*}_i\tilde{\d}_j&= X_{i'}\left(\prod_{k=i+1}^{j-1} Z_k Z_{k'}\right)Z_jX_{j'}, \\
    \ii\tilde{\d}_i\tilde{\d*}_j&= Y_{i'}\left(\prod_{k=i+1}^{j-1} Z_k Z_{k'}\right)Z_jY_{j'}, & &
    -\ii\tilde{\d*}_i\tilde{\d*}_j&= X_{i'}\left(\prod_{k=i+1}^{j-1} Z_k Z_{k'}\right)Z_jY_{j'}, \\
    & &-\ii\tilde{\d}_i\tilde{\d*}_i=Z_{i'}.
\end{aligned}
\end{equation}


\begin{figure}[t]
    \centering
    \begin{subfigure}[b]{0.4\textwidth}
        \begin{tikzpicture}[scale=1]

\begin{scope}[scale=1.5]
\draw[opacity=0.1,line width=11pt,cap=rect](0,0)--(3,0)--(3,1)--(0,1)--(0,2)--(3,2)--(3,3)--(0,3);
\end{scope}

\foreach \x in {0,1,2,3}{
    \foreach \y in {0,1,2,3}{
        \node at (1.5*\x,1.5*\y+0.25) [fill=none]{$\d$};
        \node at (1.5*\x,1.5*\y-0.25)[fill=none]{$\d*$};
        \node at (1.5*\x,1.5*\y) [shape=circle,fill=black,scale=0.4]{};
        }
    }

\foreach \x in {0,1.5,3,4.5}{
    \foreach \y in {0,1.5,3}{
        \draw[blue] (\x+0.25,\y+0.25)--(\x+0.25,\y+1.5-0.25) arc (0:180:0.25cm)--(\x-0.25,\y+0.25) arc (-180:0:0.25cm);
        }
    }

\foreach \x in {0,3}{
    \draw[blue] (\x,0)--(\x+1.5,0) arc (90:-90:0.25cm)--(\x,0-0.5) arc (270:90:0.25cm);
    \draw[blue] (\x,5)--(\x+1.5,5) arc (90:-90:0.25cm)--(\x,4.5) arc (270:90:0.25cm);
    }

\end{tikzpicture}
\caption{Even column number.}
    \end{subfigure}
    \qquad
    \begin{subfigure}[b]{0.4\textwidth}
    \centering
        \begin{tikzpicture}[scale=1]

\begin{scope}[scale=1.5]
\draw[opacity=0.1,line width=11pt,cap=rect](0,0)--(2,0)--(2,1)--(0,1)--(0,2)--(2,2)--(2,3)--(0,3);
\end{scope}

\foreach \x in {0,1,2}{
    \foreach \y in {0,1,2,3}{
        \node at (1.5*\x,1.5*\y+0.25) []{$\d$};
        \node at (1.5*\x,1.5*\y-0.25)[]{$\d*$};
        \node at (1.5*\x,1.5*\y) [shape=circle,fill=black,scale=0.4]{};
        }
    }

\foreach \x in {0,1.5}{
    \foreach \y in {0,1.5,3}{
        \draw[red] (\x+0.25,\y+0.25)--(\x+0.25,\y+1.5-0.25) arc (0:180:0.25cm)--(\x-0.25,\y+0.25) arc (-180:0:0.25cm);
        }
    }

\draw[red] (0,0)--(1.5,0) arc (90:-90:0.25cm)--(0,0-0.5) arc (270:90:0.25cm);
\draw[red] (0,5)--(1.5,5) arc (90:-90:0.25cm)--(0,4.5) arc (270:90:0.25cm);
\draw[red] (3+0.25,1.5+0.25)--(3+0.25,1.5+1.5-0.25) arc (0:180:0.25cm)--(3-0.25,1.5+0.25) arc (-180:0:0.25cm);
\draw[red] (3+0.25,4.5-0.25)--(3+0.25,4.5+0.25) arc (0:180:0.25cm)--(3-0.25,4.5-0.25) arc (-180:0:0.25cm);
\draw[red] (3+0.25,-0.25)--(3+0.25,0.25) arc (0:180:0.25cm)--(3-0.25,-0.25) arc (-180:0:0.25cm);
\draw[red] (3,3) arc (270:90:0.25cm)--(3.5,3.5) arc (90:0:0.25cm)--(3.75,1.5-0.25) arc (0:-90:0.25cm)--(3,1) arc (270:90:0.25cm)--(3.25,1.5) arc (-90:0:0.25cm)--(3.5,3-0.25) arc (0:90:0.25cm)--cycle;

\end{tikzpicture}
\caption{Odd column number.}
\label{fig:Maj pairings odd}
    \end{subfigure}
    \caption{Diagrams showing ways to pair up auxiliary Majoranas to form stabilizers in the Verstraete-Cirac encoding.
    In the case of an odd number of columns, only every other row pair is linked in the final column.
    However, this does not impact locality as the other pairs are adjacent with respect to the chosen ordering (i.e.\ the grey path), which means that their interactions can still be made geometrically local.}\label{fig:Maj pairings}
\end{figure}
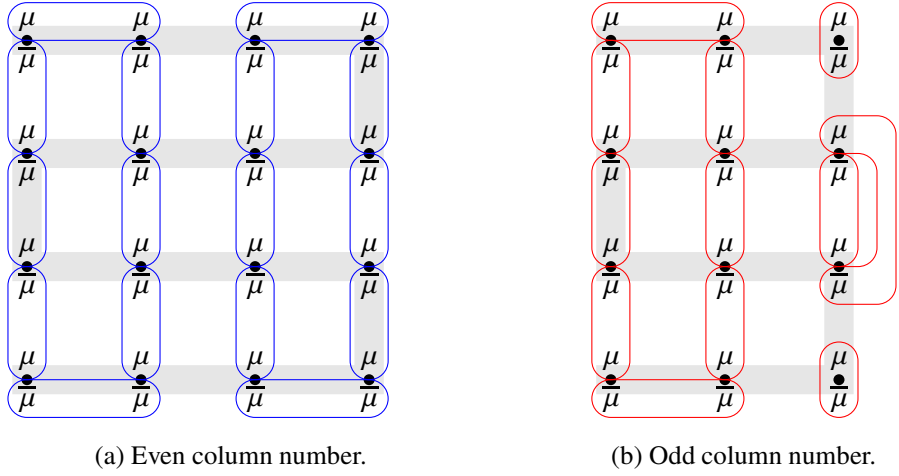

\subsection{The \DKfull{} Encoding}
The \DK{} encoding presented in~\cite{derbyKlassen2020} is also a stabilizer code, with the code space representing the fermionic Fock space.
The principle behind this encoding is related to that employed by Bravyi and Kitaev in their superfast encoding~\cite{Bravyi2002}, in which the stabilizers are associated with fermionic loop operators that must necessarily act as the identity in the fermionic Fock space, this principle is also behind the encodings in \cite{jiang2018majorana,setia2019superfast}. 
 In the case of a square lattice, the \DK{} encoding employs a qubit on every vertex and a qubit on every \emph{odd} face, according to a checker-board even-odd labelling of the faces.
The edges are also given an orientation so that they circulate clockwise or anticlockwise around \emph{even} faces.

The fermionic edge operators read $E_{ij}\coloneqq - \ii \c_i \c_j$ for all adjacent sites on the lattice, and they are dependent on the ordering of their indices, e.g.\ $E_{ji}=- \ii \c_j \c_i = \ii \c_i \c_j=-E_{ij}$.
In the square lattice encoding every edge $(i,j)$ has a unique odd face adjacent to it, whose qubit we label by $f(i,j)$.

The encoded edge operators $\tilde{E}_{ij}$ for an edge $(i,j)$ where $i$ points to $j$ are then defined to be
\begin{equation}
\tilde{E}_{ij} \coloneqq   \left\{ \begin{array}{rl}  X_i Y_j X_{f(i,j)} & \textrm{ if $(i,j)$ is oriented downwards} \\
-X_i Y_j X_{f(i,j)} & \textrm{ if $(i,j)$ is oriented upwards}  \\
X_i Y_j Y_{f(i,j)} & \textrm{ if $(i,j)$ is horizontal} \end{array} \right.
\end{equation}
Edge operators for a pair of indices $(i,j)$ \emph{against} the direction of an arrow are defined as $\tilde{E}_{ji} \coloneqq  - \tilde{E}_{ij}$.
For those edges on the boundary which are not adjacent to an odd face, the Pauli operator meant to be acting on the non-existent face qubit is omitted.
The graph orientation, face qubit placement, and forms of the edge operators are illustrated in \cref{fig:Weight3Scheme}.

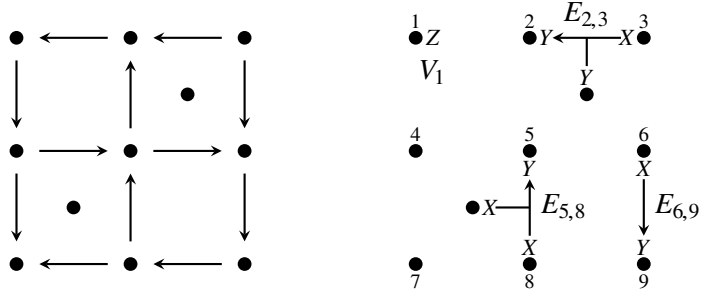
\begin{figure}[t]
\begin{center}
\begin{tikzpicture}[scale=1.5,>=stealth,thick]
\foreach \x in {0,1,2}{
    \foreach \y in {0,1,2}{
        \node at (\x,\y)[circle,fill=black,scale=0.5]{};
        }
    }

\node at (0.5,0.5)[circle,fill=black,scale=0.5]{};
\node at (1.5,1.5)[circle,fill=black,scale=0.5]{};

\foreach \x in {0,1}{
    \draw[->] (0.2+\x,1)--(0.8+\x,1);
    \draw[->] (1,0.2+\x)--(1,0.8+\x);
    \foreach \y in {0,2}{
        \draw[<-] (0.2+\x,\y)--(0.8+\x,\y);
        \draw[<-] (\y,0.2+\x)--(\y,0.8+\x);
        }
    }

\begin{scope}[shift={(3.5,0)}]
\foreach \x in {0,1,2}{
    \foreach \y in {0,1,2}{
        \node at (\x,\y)[circle,fill=black,scale=0.5]{};
        }
    }

\node at (.15,2)[scale=0.8]{$Z$};
\node at (.15,2-0.3)[]{$V_1$};

\foreach \x in {1,2,3}{
    \node at (\x-1,2.15)[scale=0.7]{$\x$};
    }
\foreach \x in {4,5,6}{
    \node at (\x-4,1.15)[scale=0.7]{$\x$};
    }
\foreach \x in {7,8,9}{
    \node at (\x-7,-.15)[scale=0.7]{$\x$};
    }

\node at (0.5,0.5)[circle,fill=black,scale=0.5]{};
\node at (1.5,1.5)[circle,fill=black,scale=0.5]{};

\node at (1.5,2.2)[]{$E_{2,3}$};
\node at (1.15,2)[scale=0.8]{$Y$};
\node at (2-0.15,2)[scale=0.8]{$X$};
\node at (1.5,1.65)[scale=0.8]{$Y$};
\draw[->] (2-0.2,2)--(1.2,2)[];
\draw[] (1.5,2)--(1.5,1.75)[];

\node at (1.3,0.5)[]{$E_{5,8}$};
\node at (1,0.15)[scale=0.8]{$X$};
\node at (1,1-0.15)[scale=0.8]{$Y$};
\node at (0.65,0.5)[scale=0.8]{$X$};
\draw[->] (1,0.25)--(1,1-0.25)[];
\draw[] (1,0.5)--(0.7,0.5)[];

\node at (2.3,0.5)[]{$E_{6,9}$};
\node at (2,0.15)[scale=0.8]{$Y$};
\node at (2,1-0.15)[scale=0.8]{$X$};
\draw[<-] (2,0.25)--(2,1-0.25)[];
\end{scope}
\end{tikzpicture}
\end{center}
\caption{(Left) The unit cell of the encoding. Dots correspond to qubits, and arrows correspond to edge orientations. (Right) Examples of encoded edge and vertex operators based on this layout.}
\label{fig:Weight3Scheme}
\end{figure}
A cycle $p = \{p_1, p_2,\ldots\}$ is a sequence of fermionic sites such that $(p_i, p_{i+1})$ is an edge of the lattice, and $p_1 = p_{\vert p\vert}$. Given a cycle $p$, the fermionic loop operator takes the form
\begin{equation}\label{eq:req4}
  \ii^{\vert p\vert-1} \prod_{i=1}^{\vert p\vert-1} E_{p_i p_{i+1}} =1.
\end{equation}
Within the encoding, these loop operators correspond to stabilizers.
The cycle basis of a graph are those cycles with which one may construct all other cycles via composition, where here the composition of two cycles is a new cycle over the disjunctive union \footnote{$a \cup b - a \cap b$} of their edges. It suffices for the generators of the stabilizers to correspond to a cycle basis of the lattice. For simplicity the cycle basis may be chosen to be the cycles around square faces. For a cycle $p$ around an ``odd'' face, we have
\begin{equation}
  \ii^{\vert p\vert-1} \prod_{i=1}^{\vert p\vert-1} \tilde{E}_{p_i p_{i+1}} =1,
\end{equation}
so these need not be included as generators.
For cycles around the even faces, let $1,2,3,4$ be the sequence of vertices around an even loop, starting with $1$ from the upper left hand corner and proceeding clockwise.
Then the corresponding stabilizer generator takes the form
\[
  \ii^{4} \tilde{E}_{12} \tilde{E}_{23} \tilde{E}_{34} \tilde{E}_{41}   =  Z_1 Z_2 Z_3 Z_4 Y_{f(1,2)} X_{f(2,3)} Y_{f(3,4)} X_{f(4,1)}
\]
as is illustrated in \cref{fig:Stabilizer}.

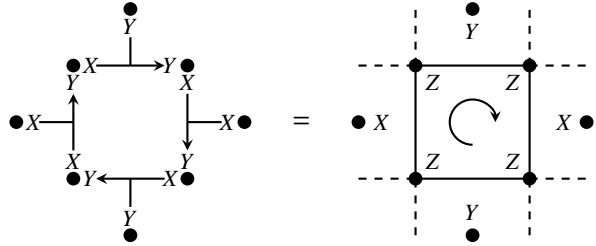
\begin{figure}[t]
    \begin{center}
    \begin{tikzpicture}[scale=1.5,>=stealth,thick]

\foreach \x in {0,1}{
    \foreach \y in {0,1}{
        \node at (\x,\y)[circle,fill=black,scale=0.5]{};
        \node at (0.5+\x-\y,-0.5+\x+\y)[circle,fill=black,scale=0.5]{};

        }
    }

\node at (0.15,0)[scale=0.8]{$Y$};
\node at (1-0.15,0)[scale=0.8]{$X$};
\node at (0.5,1.65-2)[scale=0.8]{$Y$};
\draw[->] (1-0.2,0)--(0.2,0)[];
\draw[] (.5,0)--(0.5,1.75-2)[];

\begin{scope}[rotate=180,shift={(-1,-1)}]
\node at (0.15,0)[scale=0.8]{$Y$};
\node at (1-0.15,0)[scale=0.8]{$X$};
\node at (0.5,1.65-2)[scale=0.8]{$Y$};
\draw[->] (1-0.2,0)--(0.2,0)[];
\draw[] (.5,0)--(0.5,1.75-2)[];
\end{scope}

\node at (0,0.15)[scale=0.8]{$X$};
\node at (0,1-0.15)[scale=0.8]{$Y$};
\node at (0.65-1,0.5)[scale=0.8]{$X$};
\draw[->] (0,0.25)--(0,1-0.25)[];
\draw[] (0,0.5)--(0.7-1,0.5)[];

\begin{scope}[rotate=180,shift={(-1,-1)}]
\node at (0,0.15)[scale=0.8]{$X$};
\node at (0,1-0.15)[scale=0.8]{$Y$};
\node at (0.65-1,0.5)[scale=0.8]{$X$};
\draw[->] (0,0.25)--(0,1-0.25)[];
\draw[] (0,0.5)--(0.7-1,0.5)[];
\end{scope}

\node at (2,0.5)[]{$=$};

\begin{scope}[shift={(3,0)}]
\draw (0,0)--(1,0)--(1,1)--(0,1)--cycle;

\foreach \x in {0,1}{
    \foreach \y in {0,1}{
        \node at (\x,\y)[circle,fill=black,scale=0.5]{};
        \node at (0.5+\x-\y,-0.5+\x+\y)[circle,fill=black,scale=0.5]{};
        \node at (0.15+0.7*\x,0.15+0.7*\y)[scale=0.8]{$Z$};

        \draw[dashed] (-0.5+1.5*\x,\y)--(1.5*\x,\y);
        \draw[dashed] (\y,-0.5+1.5*\x)--(\y,1.5*\x);
        }
    }

\node at (0.5,1.7-2)[scale=0.8]{$Y$};
\node at (0.5,1.5-0.2)[scale=0.8]{$Y$};
\node at (-0.3,0.5)[scale=0.8]{$X$};
\node at (1.3,0.5)[scale=0.8]{$X$};

\draw[->] (0.5,0.3) arc (270:0:0.2);

\end{scope}
\end{tikzpicture}
    \end{center}
    \caption{Non-trivial loop stabilizer of the encoding. If a loop is on the boundary of the lattice, omit any Pauli operators for which no face qubits exist.}
    \label{fig:Stabilizer}
\end{figure}

\subsection{Natural Noise on Fermionic Lattice Models}\label{sec:natural-noise}
A common fermionic lattice model is one in which lattice sites correspond to atomic positions. These atomic positions are often considered to be fixed.
However, one may consider the possibility of phonons in the lattice of atoms, and how these phonons couple to the electrons as a source of noise.
To first order, this coupling is dominated by low energy acoustical modes ~\cite{fedorov2004decoherence,bruus2002introduction}, with interaction Hamiltonian
\begin{equation}
    H_\text{int}=\frac{1}{V}\sum_{\mathbf{k},\sigma}\sum_\mathbf{q}g_\mathbf{q}a^\dagger_{\mathbf{k}+\mathbf{q},\sigma}a_{\mathbf{k},\sigma}\left(b_\mathbf{q}+b^\dagger_{-\mathbf{q}}\right),
\end{equation}
where $a^{(\dagger)}_{\mathbf{k}}$ and $b^{(\dagger)}_{\mathbf{k}}$ are, respectively, the annihilation and creation operators for fermions and phonons with momentum $\mathbf{k}$.

For a thermal bosonic bath, the effective noise model of this interaction on the fermionic system is spontaneous hopping of fermions into different momentum modes.
In the position basis this translates into dephasing noise~\cite{melnikov2016quantum}, since motion in momentum space corresponds to phase shifts in position space. The fermionic dephasing operator is:
\begin{equation}
    \left(1-2N_j\right)=-\ii\c_j\c*_j.
\end{equation}
We present a more thorough account of the derivation of this noise model in \cref{appendix:phaseNoise}.

In this paper we show that the first order noise experienced by VC and \DK{} encoded fermionic systems is dominated by noise of this type, sometimes exclusively.

\section{Mapping Physical Errors to Logical Errors}
In this section we present the main technical component of this work. We show that in both the VC encoding (\cref{sec:error-VC}) and the \DK{} encoding (\cref{sec:error-W3}) all weight-1 qubit errors fall into one of three categories: detectable errors; errors that correspond to mode-weight-1 phase noise; and errors that correspond to mode-weight-1 Majorana operators. The \emph{mode-weight} of a fermionic operator counts the number of fermionic modes acted on non-trivially (not to be confused Pauli weight). Thus, aside from the Majorana errors, all undetectable weight-1 errors correspond to low-mode-weight, local, and arguably natural fermionic noise.

Although the Majorana errors can only occur on very few sites, they can take a fermionic state into one which violates parity superselection.
In \cref{sec:ParityError} we present some novel techniques for avoiding or detecting these errors.

Our analysis here is motivated by the assumption that weight-1 Pauli noise dominates in the given noise model.
This is consistent with the most commonly studied qubit noise model, iid depolarising noise, amongst others.

\subsection{Verstraete-Cirac Encoding}\label{sec:error-VC}
To analyse how iid local qubit noise affects encoded fermionic states, we must understand how individual $X$, $Y$ and $Z$ Pauli errors on the qubits translate to the encoded fermionic Fock space of the VC encoding.
To this end, we first make the following observation:

\begin{proposition}[Undetectable VC Errors]\label{lemma:VC-undetectable}
In the VC encoding, all undetectable qubit errors correspond to either fermionic operators acting exclusively on primary fermionic sites, stabilizer operators, or products of operators of these two types.
\end{proposition}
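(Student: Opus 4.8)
The plan is to characterize the undetectable errors directly from the stabilizer formalism. An error $P$ (a Pauli operator on the full qubit system) is undetectable precisely when it commutes with every stabilizer generator, i.e.\ when $P$ lies in the centralizer of the stabilizer group $\mathcal{S}$. Since $\mathcal{S}$ is generated by the auxiliary-Majorana pairings $\ii\tilde{\d}_i\tilde{\d}_j$ (and variants) listed in \cref{eq:VC_aux_mappings}, and these together involve every auxiliary Majorana exactly once, the key structural fact is that the encoded algebra splits as a tensor-product-like decomposition: the $2^N$-dimensional qubit space carries an action of the full encoded Majorana algebra, generated by the primary Majoranas $\tilde{\c}_j,\tilde{\c*}_j$ together with all auxiliary Majoranas $\tilde{\d}_{j},\tilde{\d*}_{j}$, and this generating set spans (as an algebra) the whole Pauli group on $N$ qubits. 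So the first step is to argue that every Pauli operator $P$ can be written (up to phase) as a product of a primary-Majorana monomial and an auxiliary-Majorana monomial.

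The second step is to identify which such products commute with $\mathcal{S}$. The primary-Majorana part automatically commutes with every stabilizer, because stabilizers are built from auxiliary Majoranas only and primary/auxiliary Majoranas anticommute in disjoint pairs — more carefully, any monomial in the $\tilde\c$'s commutes with any monomial in the $\tilde\d$'s up to a sign that is always $+1$ here since stabilizers are even-degree in the $\d$'s, while a primary monomial, being disjoint in Majorana labels, contributes no sign. Hence the burden falls on the auxiliary part: I must show that an auxiliary-Majorana monomial $M$ commutes with all generators in \cref{eq:VC_aux_mappings} if and only if $M$ is (a phase times) an element of $\mathcal{S}$. This is a standard self-duality argument for the Majorana pairing: grouping the $2\times(\#\text{sites})$ auxiliary Majoranas into the chosen pairs, a monomial commutes with a given pair-product $\ii\tilde\d_a\tilde\d_b$ iff it contains both or neither of $\{\tilde\d_a,\tilde\d_b\}$; requiring this for all pairs forces $M$ to be a product of entire pair-generators, i.e.\ $M\in\mathcal{S}$ up to phase. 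Concretely one can also verify this at the Pauli level: an auxiliary-supported Pauli commuting with all the $Z_{i'}=-\ii\tilde\d_i\tilde\d*_i$ must be $Z$-type on every auxiliary qubit, and then commuting with the two-auxiliary generators pins it down to a stabilizer product.

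Combining the two steps: an undetectable $P$ decomposes as (primary-Majorana monomial)$\,\times\,$(stabilizer), which is exactly the claimed form — a fermionic operator on primary sites only, times a stabilizer (and the "products of the two types" clause covers the general case, while the pure cases arise when one factor is trivial). I would then close by noting the converse is immediate: primary-site fermionic operators commute with $\mathcal{S}$ for the reason above, stabilizers commute with $\mathcal{S}$ by definition, and hence so do their products.

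The main obstacle I anticipate is making the "every Pauli is a primary monomial times an auxiliary monomial, and the auxiliary centralizer of $\mathcal{S}$ is exactly $\mathcal{S}$" step fully rigorous without hand-waving about the Jordan–Wigner string bookkeeping. The encoded Majoranas carry long $Z$-strings, so the factorization into "primary part" and "auxiliary part" is not literally a tensor factorization of Pauli supports; one has to work in the Majorana algebra and track that the strings cancel correctly. I expect the cleanest route is a dimension/counting argument: the encoded Majorana algebra has the right dimension to be all of $\mathrm{Pauli}(N)$, the $\tilde\d$-pairings generate a maximal set of independent commuting Majorana bilinears (a "Majorana stabilizer code"), and the centralizer of such a group within the Majorana algebra is generated by the group together with the Majoranas left untouched — but here no auxiliary Majorana is left untouched, so only primary Majoranas survive. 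Getting the sign conventions and the "$\ii$ factors" in \cref{eq:VC_aux_mappings} to line up is the fiddly part, but it is routine once the algebraic picture is set up.
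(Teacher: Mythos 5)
Your proposal is correct, and it reaches the paper's conclusion by the same underlying logic but in a more self-contained way. The paper's own proof is two sentences long: it cites the standard stabilizer-code fact that undetectable errors are exactly logical operators, stabilizers, or products thereof, and then simply \emph{asserts} that in the VC encoding the logical operators are the operators acting exclusively on the primary fermionic sites. You instead prove that assertion: every Pauli is, via Jordan--Wigner, proportional to a monomial in the encoded primary and auxiliary Majoranas; the primary factor commutes with all stabilizers because the stabilizers are even (quadratic) in the auxiliary Majoranas; and the pairing argument shows that an auxiliary monomial commutes with each pair generator of \cref{eq:VC_aux_mappings} iff it contains both or neither member of that pair, hence commutes with all of them iff it is a product of the generators, i.e.\ a stabilizer up to phase. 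That lemma is precisely the content the paper delegates to the citation and to the construction of the encoding, so your route buys a self-contained argument at the cost of the Majorana bookkeeping you anticipate. One small inaccuracy worth noting: your ``Pauli-level'' sanity check treats every $Z_{i'}=-\ii\tilde{\d}_i\tilde{\d*}_i$ as a stabilizer, but in the VC pairing such single-site products are stabilizers only in exceptional cases (e.g.\ the two corner sites in the odd-column pairing of \cref{fig:Maj pairings odd}); this aside is not load-bearing, since your Majorana-level pairing argument is the correct and sufficient one.
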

\begin{proof}
This is a straightforward recasting of the fact that undetectable errors in quantum stabilizer codes correspond to logical operators, stabilizers, or products thereof \cite{gottesman1997stabilizer}. In the VC encoding the logical operators correspond to operators acting exclusively on the primary fermionic sites.
\end{proof}

For now, we focus on Pauli weight-1 errors only---i.e.\ the case where there is a \emph{single} such error occurring at one point within the qubit lattice.
There are six possible types of weight-1 Pauli errors: $X_i$, $Y_i$ or $Z_i$ on a primary qubit $i$, or $X_{i'}$, $Y_{i'}$, or $Z_{i'}$ on an ancillary qubit $i'$.

\begin{theorem}[Undetectable Weight-1 Errors in VC Encoding]\label{lem:VC weight-1}
In the VC encoding, the only non-trivial, undetectable, Pauli weight-1 errors are:
\begin{enumerate}
    \item $Z$ operators on primary qubits, which map to mode-weight-1 errors $Z_j \mapsto -\ii \c_j \c*_j$.
    \item $X$ or $Y$ errors on the first primary qubit according to the chosen fermion ordering, which map to $X_1 \mapsto \c_1$, $Y_1 \mapsto \c*_1$, respectively.
\end{enumerate}
\end{theorem}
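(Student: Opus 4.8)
The plan is to go through each of the six weight-1 Pauli error types and determine which are undetectable, using \Cref{lemma:VC-undetectable}: an error is undetectable if and only if it equals a product of a primary-site fermionic operator and a stabilizer. For the detectable cases I want to exhibit an explicit stabilizer generator that anticommutes with the error; for the undetectable cases I want to identify the corresponding logical fermionic operator.

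First I would handle the ancillary qubits. An $X_{i'}$, $Y_{i'}$, or $Z_{i'}$ error acts on an auxiliary site, and since every auxiliary Majorana appears in exactly one stabilizer pairing $\ii\tilde{\d}_{i'}\tilde{\d}_{k'}$ (or a variant), the relevant test is whether the single-qubit Pauli on $i'$ commutes with the two stabilizers touching site $i'$ — namely the pairing stabilizer and $-\ii\tilde{\d}_i\tilde{\d*}_i = Z_{i'}$ from \Cref{eq:VC_aux_mappings}. The operator $Z_{i'}$ is itself a stabilizer, so a $Z_{i'}$ error is a stabilizer and hence ``trivial'' on the code space; an $X_{i'}$ or $Y_{i'}$ error anticommutes with $Z_{i'}$ and is therefore detectable. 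So no ancillary error survives as a non-trivial undetectable error.

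Next the primary qubits. For a $Z_j$ error: reading off \Cref{eq:VC_aux_mappings} and the encoded Majoranas, one checks that $-\ii\tilde{\c}_j\tilde{\c*}_j = Z_j$ (the $Z$-strings from sites $i<j$ cancel, leaving $X_j Y_j = \ii Z_j$ up to the overall $-\ii$), so $Z_j$ is exactly the encoded mode-weight-1 phase-noise operator $-\ii\c_j\c*_j$ — undetectable, as claimed in item 1. For an $X_j$ or $Y_j$ error with $j \neq 1$: the encoded $\tilde{\c}_j$ carries a string $\prod_{i<j} Z_i Z_{i'}$, so $X_j$ differs from $\tilde{\c}_j$ by that string, which is not a product of stabilizers (it has unpaired $Z_{i'}$ factors, detectable via the pairing stabilizers) — hence $X_j$ anticommutes with some stabilizer and is detectable; similarly $Y_j$. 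For $j = 1$ there are no preceding sites, the string is empty, and $X_1 = \tilde{\c}_1$, $Y_1 = \tilde{\c*}_1$ exactly — these are logical (Majorana) operators and hence undetectable, giving item 2.

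I expect the main obstacle to be the ancillary-$X/Y$ and primary-$X/Y$ detectability arguments: to conclude ``detectable'' rigorously one must argue that the residual $Z$-string (after stripping a logical operator) is genuinely not a product of stabilizers, which requires knowing that the pairing of auxiliary Majoranas into stabilizers means every nonempty product of stabilizers has full-weight-ish support on the auxiliary sublattice in a way that no single-qubit-induced string can match. Concretely I would argue: a product of pairing stabilizers, restricted to auxiliary qubits, acts as $X_{i'}$ or $Y_{i'}$ on each endpoint auxiliary qubit of the selected pairs and $Z_{i'}$ on the interior ones, so it can never reduce to a pure $\prod Z$-string on a contiguous initial segment unless it is the product of $Z_{i'}=-\ii\tilde\d_i\tilde\d*_i$ stabilizers — but then the support on primary qubits is empty, contradicting the presence of $X_j$ or $Y_j$. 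This commutation/support bookkeeping is the crux; everything else is direct substitution into \Cref{eq:VC_aux_mappings}.
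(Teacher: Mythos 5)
Your overall skeleton (case analysis over the six weight-1 Paulis, \cref{lemma:VC-undetectable} as the criterion, and the identifications $Z_j\mapsto -\ii\c_j\c*_j$, $X_1\mapsto\c_1$, $Y_1\mapsto\c*_1$) matches the paper, but two of your justifications contain genuine errors. First, on the auxiliary qubits you assert that $Z_{i'}=-\ii\tilde{\d}_i\tilde{\d*}_i$ is always a stabilizer and base both the ``trivial'' verdict for $Z_{i'}$ and the detectability of $X_{i'},Y_{i'}$ on it. This is false in general: the stabilizer group is generated by the chosen pairings, each auxiliary Majorana occurs in exactly one generator and distinct generators share no Majoranas, so $\d_i\d*_i$ lies in the stabilizer group only if the pairing itself pairs $\d_i$ with $\d*_i$ at that site --- which happens only at special corner sites for certain pairings (cf.\ \cref{fig:Maj pairings odd}); the paper is careful to say $Z_{j'}$ is \emph{either} detectable \emph{or} trivial depending on the pairing. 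Your conclusion for this case survives, but the argument needs replacing; the robust route is the paper's parity count: $X_{i'},Y_{i'}$ map to an \emph{odd} product of auxiliary Majoranas, whereas any product of a primary-only logical with a stabilizer contains an even number, so \cref{lemma:VC-undetectable} gives detectability directly.

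The more serious gap is in the crux step for $X_j,Y_j$ with $j\neq1$. Your support-bookkeeping lemma --- that a product of pairing stabilizers acts as $X$ or $Y$ on every endpoint auxiliary qubit, hence can only be a pure $Z$-string if it is a product of single-site $Z_{i'}$ stabilizers --- is false. Whenever both pairings touching a site are included, the two endpoint factors multiply to a $Z$: e.g.\ using \cref{eq:VC_aux_mappings}, the product of the stabilizers pairing $\d_a$ with $\d*_b$ and $\d*_a$ with $\d_b$ is $\pm Z_{a'}Z_{b'}$, and more generally the product around any closed cycle of the pairing is $\pm\prod_{i\in C}Z_{i'}$, a pure $Z$-string on auxiliary qubits with no primary support. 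So ``pure $Z$-string $\Rightarrow$ product of $Z_{i'}$ stabilizers'' fails, and your contradiction evaporates. What actually closes the argument --- and what your proposal never invokes --- is the defining geometric property of the VC pairing: Majoranas are paired across lattice-adjacent but JW-non-consecutive auxiliary sites. The auxiliary content of $X_j$ (all auxiliary Majoranas on sites $<j$) then contains exactly one member of some pair whose partner sits at a site $\geq j$, because on a 2D grid every nonempty proper initial segment of the ordering has a vertex with such a non-consecutive lattice neighbour outside the segment; since every stabilizer-group element consists of complete pairs, the error cannot be of the form (primary logical)$\times$(stabilizer), hence is detectable, with the sole exception $j=1$ where the string is empty. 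Without this lattice input the theorem is simply not forced: a pairing that matched $\d_1$ with $\d*_1$ would make $X_2=\tilde{\c}_1\tilde{\c*}_1\tilde{\c}_2\cdot Z_{1'}$ undetectable.
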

\begin{proof}
Consider the forms of single qubit errors in the fermion picture (up to irrelevant global phases):
\begin{align*}
X_j&\longmapsto \left(\prod_{i<j}\c_i\c*_i\d_i\d*_i\right)\c_j, \quad & X_{j'}&\longmapsto \left(\prod_{i<j}\c_i\c*_i\d_i\d*_i\right)\c_j\c*_j\d_j, \\
Y_j&\longmapsto\left(\prod_{i<j}\c_i\c*_i\d_i\d*_i\right)\c*_j, \quad & Y_{j'}&\longmapsto \left(\prod_{i<j}\c_i\c*_i\d_i\d*_i\right)\c_j\c*_j\d*_j, \\
Z_j&\longmapsto\c_j\c*_j, \quad & Z_{j'}&\longmapsto\d_j\d*_j.
\end{align*}

\paragraph{Auxiliary Qubits.}
$X_{j'}$ and $Y_{j'}$ operators on the auxiliary qubits map to \emph{odd} products of auxiliary Majoranas, whereas all stabilizer operators are \emph{even} products of these. Thus by \cref{lemma:VC-undetectable} they are detectable.
An operator $Z_{j'}$ acting on an auxiliary qubit maps to an opertor acting only on auxiliary fermions. Thus by \cref{lemma:VC-undetectable} it can only be undetectable if it is a stabilizer. It may or may not be a stabilizer depending on the pairing chosen---see e.g.\ \cref{fig:Maj pairings odd} which shows a pairing where $Z_{j'}$ is a stabilizer on two corner sites.
Thus it is either a detectable error (i.e.\ not a stabilizer), or a trivial undetectable error (i.e.\ a stabilizer).

\paragraph{Primary Qubits.}
An operator $Z_j$ on a primary qubit maps to a mode weight-1 operator on one fermionic site only. Since there are no auxiliary Majorana operators involved, this error is undetectable by \cref{lemma:VC-undetectable}.

This leaves $X_j$ and $Y_j$ operators on primary qubits.
Each of these maps to a fermionic operator which includes the product of all auxiliary Majoranas on auxiliary sites $\le j'$ in the ordering, and includes no auxiliary Majoranas on those auxiliary sites greater than or equal to $j'$ in the ordering. 

Recall that the VC encoding pairs up all Majoranas on adjacent, non-consecutive auxiliary sites. Consider the product of all auxiliary Majorana operators on auxiliary sites $\le j'$. If any of these sites is adjacent to a non-consecutive auxiliary site $> j'$, it isn't paired up correctly, and cannot be a stabilizer. Thus $X_j$ and $Y_j$ can only be stabilizer if none of the auxiliary sites $\le j'$ is adjacent to an auxiliary site $> j'$.
However on a 2D grid, regardless of ordering, there is only one case where this can occur: when the $X_j$ or $Y_j$ acts on the first site in the ordering, where there are no prior sites. In this case the errors $X$ and $Y$ map to single Majorana operators on the first site.
\end{proof}

To summarise, all undetectable weight-1 errors are encoded as mode-weight-1 fermionic operators acting on the corresponding fermionic sites, or are encoded as the identity. Almost all of these mode-weight-1 operators are local fermionic phase noise, except for two which act on a single mode as Majorana operators.

\subsection{\DK{} Encoding} \label{sec:error-W3}

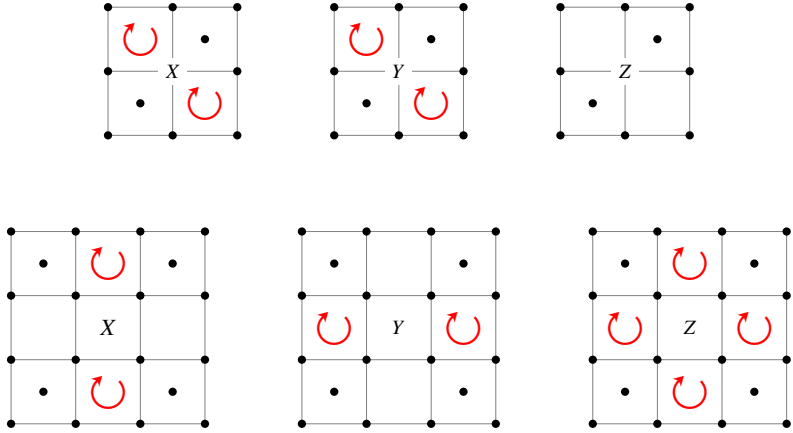
\begin{figure}[t]
    \centering
    \begin{tikzpicture}[scale=0.85]
\draw[step=1cm,gray,very thin] (0,0) grid (3,3);

\foreach \x in {0,1,2,3}{
    \foreach \y in {0,1,2,3}{
        \node at (\x,\y)[fill=black,circle,scale=0.3]{};
        }
    }
\foreach \x in {0,2}{
    \foreach \y in {0,2}{
        \node at (\x+0.5,\y+0.5)[fill=black,circle,scale=0.3]{};
        }
    }
\node at (1.5,1.5)[fill=white,scale=0.8]{$X$};
\node at (1.5,0.5)[text=red,scale=1.6]{$\circlearrowright$};
\node at (1.5,2.5)[text=red,scale=1.6]{$\circlearrowright$};

\begin{scope}[shift={(4.5,0)}]
\draw[step=1cm,gray,very thin] (0,0) grid (3,3);
\foreach \x in {0,1,2,3}{
    \foreach \y in {0,1,2,3}{
        \node at (\x,\y)[fill=black,circle,scale=0.3]{};
        }
    }
\foreach \x in {0,2}{
    \foreach \y in {0,2}{
        \node at (\x+0.5,\y+0.5)[fill=black,circle,scale=0.3]{};
        }
    }
\node at (1.5,1.5)[fill=white,scale=0.7]{$Y$};
\node at (0.5,1.5)[text=red,scale=1.6]{$\circlearrowright$};
\node at (2.5,1.5)[text=red,scale=1.6]{$\circlearrowright$};
\end{scope}

\begin{scope}[shift={(9,0)}]
\draw[step=1cm,gray,very thin] (0,0) grid (3,3);
\foreach \x in {0,1,2,3}{
    \foreach \y in {0,1,2,3}{
        \node at (\x,\y)[fill=black,circle,scale=0.3]{};
        }
    }
\foreach \x in {0,2}{
    \foreach \y in {0,2}{
        \node at (\x+0.5,\y+0.5)[fill=black,circle,scale=0.3]{};
        }
    }

\node at (1.5,1.5)[fill=white,scale=0.7]{$Z$};

\node at (0.5,1.5)[text=red,scale=1.6]{$\circlearrowright$};
\node at (2.5,1.5)[text=red,scale=1.6]{$\circlearrowright$};
\node at (1.5,0.5)[text=red,scale=1.6]{$\circlearrowright$};
\node at (1.5,2.5)[text=red,scale=1.6]{$\circlearrowright$};
\end{scope}

\begin{scope}[shift={(1.5,4.5)}]
\draw[step=1cm,gray,very thin] (0,0) grid (2,2);
\foreach \x in {0,1,2}{
    \foreach \y in {0,1,2}{
        \node at (\x,\y)[fill=black,circle,scale=0.3]{};
        }
    }
\node at (0.5,0.5)[fill=black,circle,scale=0.3]{};
\node at (1.5,1.5)[fill=black,circle,scale=0.3]{};
\node at (1,1)[fill=white,scale=0.7]{$X$};
\node at (1.5,0.5)[text=red,scale=1.6]{$\circlearrowright$};
\node at (0.5,1.5)[text=red,scale=1.6]{$\circlearrowright$};

\begin{scope}[shift={(3.5,0)}]
\draw[step=1cm,gray,very thin] (0,0) grid (2,2);
\foreach \x in {0,1,2}{
    \foreach \y in {0,1,2}{
        \node at (\x,\y)[fill=black,circle,scale=0.3]{};
        }
    }
\node at (0.5,0.5)[fill=black,circle,scale=0.3]{};
\node at (1.5,1.5)[fill=black,circle,scale=0.3]{};
\node at (1,1)[fill=white,scale=0.7]{$Y$};
\node at (1.5,0.5)[text=red,scale=1.6]{$\circlearrowright$};
\node at (0.5,1.5)[text=red,scale=1.6]{$\circlearrowright$};
\end{scope}

\begin{scope}[shift={(7,0)}]
\draw[step=1cm,gray,very thin] (0,0) grid (2,2);
\foreach \x in {0,1,2}{
    \foreach \y in {0,1,2}{
        \node at (\x,\y)[fill=black,circle,scale=0.3]{};
        }
    }
\node at (0.5,0.5)[fill=black,circle,scale=0.3]{};
\node at (1.5,1.5)[fill=black,circle,scale=0.3]{};
\node at (1,1)[fill=white,scale=0.7]{$Z$};
\end{scope}

\end{scope}
\end{tikzpicture}
    \caption{Graphical representation of the syndromes of single qubit errors on the \DK{} encoding on vertex qubits (upper) and face qubits (lower). The red circlular arrows represent the loop stabilizers that anticommute with the given error. $Z$ on a vertex qubit has no syndrome as it is a logical operator. For boundary cases where a highlighted stabilizer doesn't exist, it is omitted from the syndrome. Note that errors on vertex qubits can occur in contexts where the odd and even faces are inverted from how they are illustrated here. In this case the syndrome is similarly inverted.}
    \label{fig:weight-3 syndromes}
\end{figure}

In this section we argue that for the \DK{} encoding all Pauli weight-1 undetectable errors correspond to mode-weight 1 fermionic noise.

\begin{theorem}\label{lem:\DK{} weight-1}
In the \DK{} encoding for square lattices, the only non-trivial, undetectable, Pauli weight-1 errors are:
\begin{enumerate}
    \item $Z$ operators on primary qubits, which map to mode-weight-1 fermionic phase errors $Z_j \mapsto -\ii \c_j \c*_j$.
    \item In the case where the full fermionic space is encoded, $X$ or $Y$ errors on those corners adjacent only to an odd face, which map to single Majorana or Majorana hole operators, depending on the choice of convention.\footnote{For example $X_i \mapsto \c_i, \; Y_i \mapsto \c*_i$ or $X_i = \c_i \prod_j (- i \c_j \c*_j), \; Y_i \mapsto \c*_i \prod_{j} (-i \c_j \c*_j)$. For a fixed fermionic parity, the operator $\prod_{j} (-i \c_j \c*_j)$ is a good quantum number equal to $\pm1$, and so fermionic hole operators can be thought of as mode-weight-1 fermionic operators.}
\end{enumerate}
\end{theorem}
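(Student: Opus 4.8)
The plan is to mirror the structure of the proof of \cref{lem:VC weight-1}: first write down the images of all six single-qubit Pauli errors in the fermionic picture, classify which of these are logical operators and which are detectable (have nontrivial syndrome), and then identify the logical ones with mode-weight-1 fermionic operators. Here the logical-versus-detectable question is settled by the loop stabilizers, which are the even-face cycles, so the first step is to record, for each of $X_i, Y_i, Z_i$ on a vertex qubit $i$ and $X_{f}, Y_{f}, Z_f$ on an odd-face qubit $f$, which even-face loop stabilizers anticommute with it. This is exactly the content of \cref{fig:weight-3 syndromes}, so I would derive that figure carefully: each vertex qubit sits on up to four even faces, and an $X$, $Y$, or $Z$ on it flips the syndrome of precisely those even-face loop stabilizers on whose boundary it appears with an anticommuting Pauli (recalling the loop stabilizer is $Z_1Z_2Z_3Z_4$ on vertices times $X/Y$'s on face qubits). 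A $Z$ on a vertex commutes with every loop stabilizer (all carry $Z$ on the vertices), hence is logical; $X$ and $Y$ on a vertex generically anticommute with two even-face loops, hence are detectable, \emph{except} at a corner where the relevant even faces do not exist. Similarly, $X$, $Y$, $Z$ on an odd-face qubit anticommute with some of the neighbouring even-face loops (which carry $X$ or $Y$ on that face qubit), and I would check that in every case — interior or boundary — at least one such anticommuting loop stabilizer genuinely exists, so all face-qubit errors are detectable.

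The second step is the identification of the surviving logical errors with low-mode-weight fermionic operators. For $Z$ on a vertex qubit this is the standard fact that the encoded vertex number operator $1-2N_j = -\ii\c_j\c*_j$ is (logically equivalent to) the single-qubit $Z_j$; I would verify this directly from the edge-operator definitions, e.g. by noting $Z_j$ commutes with all $\tilde E_{jk}$ correctly and acts as the parity operator on mode $j$ — or simply cite it from \cite{derbyKlassen2020}. For $X$ and $Y$ on a corner vertex $i$ that is adjacent only to an odd face, the claim is that these map to $\c_i$ and $\c*_i$ (up to a global fermionic-parity factor, as in the footnote). The argument is that such a corner vertex participates in no even-face loop stabilizer at all, so $X_i$ and $Y_i$ are automatically logical; and since they anticommute with each other, flip the encoded parity, and act correctly (by inspection of $\tilde E_{ij}$ for the one edge incident to $i$) on the single mode $i$, they must be the encoded single-Majorana operators on that mode, possibly dressed by the global operator $\prod_j(-\ii\c_j\c*_j)$, which as the footnote explains is a $\pm1$ scalar on each parity sector.

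The third step is to confirm the two stated items are \emph{exhaustive} — i.e. that no other weight-1 error is both nontrivial and undetectable. This follows by combining the syndrome computation (every face-qubit error and every non-corner vertex $X/Y$ error has a nonzero syndrome) with the observation that the only vertex-qubit errors with empty syndrome are the $Z$'s (always) and the corner $X/Y$'s (only at odd-face-adjacent corners), and that those corner errors are nontrivial only when the full fermionic space — rather than a fixed-parity sector — is encoded, since otherwise $\prod_j(-\ii\c_j\c*_j)$ fixes the parity and a single Majorana takes you outside the code space's physical sector, which is precisely why the statement carries the "full fermionic space" caveat.

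I expect the main obstacle to be the careful boundary-case bookkeeping in the syndrome computation: the even/odd checkerboard face labelling, the edge orientations, and which face qubits actually exist along each edge of the lattice all interact, and one must be sure that every face-qubit error really does have at least one existing anticommuting loop stabilizer (so that none of them sneak through as undetectable), and conversely that the only corners where $X/Y$ becomes logical are exactly those adjacent solely to an odd face. Getting the orientation- and parity-dependent signs in $\tilde E_{ij}$ straight — so that the identification $X_i\mapsto\c_i$, $Y_i\mapsto\c*_i$ comes out with the right roles rather than swapped — is the other fiddly point, but it is a finite check rather than a conceptual difficulty.
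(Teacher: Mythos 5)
Your proposal is correct and follows essentially the same route as the paper: the paper's proof is exactly the syndrome inspection you describe (via \cref{fig:weight-3 syndromes}), concluding that face-qubit errors always have a syndrome, vertex $Z$ errors are syndrome-free logical phase errors, and vertex $X/Y$ errors are detectable except at corners touching only an odd face, where they become single Majorana or Majorana-hole operators. Your version simply fills in the boundary bookkeeping and the Majorana identification more explicitly than the paper does (one trivial slip: a corner vertex has two incident edges, not one).
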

\begin{proof}
There can be $X$, $Y$ or $Z$ errors, either on the lattice face qubits (auxiliary qubits), or the vertices (primary qubits). We address them separately.

\paragraph{Auxiliary Qubits.} By inspection of the syndromes of single qubit errors shown in \cref{fig:weight-3 syndromes}, it is evident that Pauli weight 1 errors on face qubits always have a syndrome and are thus detectable.

\paragraph{Primary Qubits.} Any $Z$ errors on vertex qubis are undetectable as they have no syndrome; they correspond to fermionic phase errors.
On the other hand, $X$ and $Y$ errors on vertex qubits \emph{are} detectable, as they induce syndromes on two diagonally offset loop stabilizers. However there may be corners of the lattice which are not touching a stabilizer (i.e.\ a corner which only touches an odd face). $X$ or $Y$ errors on those vertex qubits correpond to single Majorana or Majorana hole operators. These are parity switching errors, so are not possible if only the even fermionic subspace is encoded.
\end{proof}

\subsection{Mitigating Parity Switching Errors} \label{sec:ParityError}

Parity switching errors, such as single Majoranas, can lead to violations of parity superselection in the encoded fermionic system.
One strategy to mitigate these erros, up to first order, is to measure the parity of the fermionic system as a stabilizer.
This parity stabilizer is given by the product of vertex operators at every fermionic site, which in our case corresponds to the product of $Z$ operators on every primary vertex qubit. However if one is performing non-destructive and coherent stabilizer measurements, or is interested in measuring observables that do not commute with the vertex operators, then such a stabilizer can be very costly to measure coherently.

In the following sections, we propose alternative methods for dealing with weight-1 parity switching errors in the VC and \DK{} encodings by introducing only minor modifications to the constructions of the respective encodings.

\subsubsection{Verstraete-Cirac Encoding}
The only parity switching errors at first order are single Majoranas on the first site. These can be made second order by a simple change in site ordering. If the first primary and first auxiliary site are switched in the ordering---i.e.\ such that the ordering becomes $1',1,2,2',3,3',4,4',\ldots$---and the code is constructed as usual, then the qubit representation of these errors will become weight-2 without affecting the weight of local interaction terms.

\subsubsection{\DK{} Encoding}
An even $\times$ even lattice can avoid parity switching errors altogether by choosing the appropriate checkerboard pattern of the auxiliary qubits such that none are placed in corner faces, permitting only the representation of parity-preserving fermionic operators. Even by odd and odd by odd lattices always have auxiliary qubits in two corners and so will always permit parity switching errors.

If the two corner fermionic sites adjacent to the odd faces are removed as shown in \cref{fig:CornerShaving}, then the parity switching errors correspond to Pauli weight-2 errors---i.e.\ the single Majorana or hole operator---has a weight-2 qubit representation.
One can preserve most of the structure of the corner, by introducing a new diagonal edge operator connecting those vertex qubits which had previously been connected to the removed site.
To ensure the correct anti-commutation relations, this new edge operator acts with a $Z$ on the face qubit and will act with the same Pauli operator on its incident vertex qubits as the two edge operators bounding the odd face (see \cref{fig:CornerShaving}).
The cycle operator formed by these three edge operators is the identity, provided the correct sign convention is chosen for the diagonal edge.

Single Majorana (or Majorana hole) operators may be added on either of the sites which were previously adjacent to the removed corner site, by applying a weight-2 Pauli operator on the corresponding vertex qubit and on the face qubit.
Note that these operators anti-commute with all incident edge operators and the vertex operator on that site.

\begin{figure}[t]
\begin{center}
\begin{tikzpicture}[scale=1.5,>=stealth,thick]

\foreach \x in {0,1}{
    \foreach \y in {0,1}{
        \node at (\x,\y)[fill=black,circle,scale=0.5]{};
        }
    }
\node at (0.5,0.5)[fill=black,circle,scale=0.5]{};

\node at (0.15,0)[scale=0.8]{$Y$};
\node at (0.85,0)[scale=0.8]{$X$};
\node at (0.5,0.35)[scale=0.8]{$Y$};
\draw[->] (1-0.2,0)--(0.2,0)[];
\draw[] (0.5,0)--(0.5,0.25)[];

\begin{scope}[rotate=180,shift={(-1,-1)}]
\node at (0.15,0)[scale=0.8]{$Y$};
\node at (0.85,0)[scale=0.8]{$X$};
\node at (0.5,0.35)[scale=0.8]{$Y$};
\draw[->] (1-0.2,0)--(0.2,0)[];
\draw[] (0.5,0)--(0.5,0.25)[];
\end{scope}

\node at (1,0.15)[scale=0.8]{$X$};
\node at (1,0.85)[scale=0.8]{$Y$};
\node at (.65,0.5)[scale=0.8]{$X$};
\draw[->] (1,0.25)--(1,0.75)[];
\draw[] (1,0.5)--(.7,0.5)[];

\begin{scope}[rotate=180,shift={(-1,-1)}]
\node at (1,0.15)[scale=0.8]{$X$};
\node at (1,0.85)[scale=0.8]{$Y$};
\node at (.65,0.5)[scale=0.8]{$X$};
\draw[->] (1,0.25)--(1,0.75)[];
\draw[] (1,0.5)--(.7,0.5)[];
\end{scope}

\draw[dashed] (1.2,1)--(1.5,1);
\draw[dashed,<-] (1.2,0)--(1.5,0);
\draw[dashed] (0,-0.2)--(0,-0.5);
\draw[dashed,<-] (1,-0.2)--(1,-0.5);

\node at (1.6,0.5){$\rightarrow$};

\begin{scope}[shift={(2,0)}]
\foreach \x in {0,1}{
    \foreach \y in {0,1}{
        \node at (\x,\y)[fill=black,circle,scale=0.5]{};
        }
    }
\node at (0.5,0.5)[fill=black,circle,scale=0.5]{};
\node at (0,1)[fill=white,circle,scale=0.4]{};

\node at (0.15,0)[scale=0.8]{$Y$};
\node at (0.85,0)[scale=0.8]{$X$};
\node at (0.5,0.35)[scale=0.8]{$Y$};
\draw[->] (1-0.2,0)--(0.2,0)[];
\draw[] (0.5,0)--(0.5,0.25)[];

\node at (1,0.15)[scale=0.8]{$X$};
\node at (1,0.85)[scale=0.8]{$Y$};
\node at (.65,0.5)[scale=0.8]{$X$};
\draw[->] (1,0.25)--(1,0.75)[];
\draw[] (1,0.5)--(.7,0.5)[];

\node at (0,0.15)[scale=0.8]{$Y$};
\node at (0.85,1)[scale=0.8]{$Y$};
\node at (0.4,0.6)[scale=0.8]{$Z$};

\draw (0,0.25) arc (180:90:0.75);
\draw (0.33,0.67)--(0.22,0.78);

\draw[dashed] (1.2,1)--(1.5,1);
\draw[dashed,<-] (1.2,0)--(1.5,0);
\draw[dashed] (0,-0.2)--(0,-0.5);
\draw[dashed,<-] (1,-0.2)--(1,-0.5);
\end{scope}

\begin{scope}[shift={(4.4,0)}]
\foreach \x in {0,1}{
    \foreach \y in {0,1}{
        \node at (\x,\y)[fill=black,circle,scale=0.5]{};
        }
    }
\node at (0.5,0.5)[fill=black,circle,scale=0.5]{};
\node at (0,1)[fill=white,circle,scale=0.4]{};

\draw[->] (1-0.2,0)--(0.2,0)[];
\draw[->] (1,0.25)--(1,0.75)[];
\draw (0,0.25) arc (180:90:0.75);

\node at (0.61,0.61)[scale=0.8]{$Y$};
\node at (0.89,0.89)[scale=0.8]{$Y$};

\draw[thin](0.4,0.6)--(0.9,1.1)arc(135:-45:0.14)--(0.6,0.4)arc(-45:-225:0.141);

\draw[dashed] (1.2,1)--(1.5,1);
\draw[dashed,<-] (1.2,0)--(1.5,0);
\draw[dashed] (0,-0.2)--(0,-0.5);
\draw[dashed,<-] (1,-0.2)--(1,-0.5);

\begin{scope}[shift={(2,0)}]
\foreach \x in {0,1}{
    \foreach \y in {0,1}{
        \node at (\x,\y)[fill=black,circle,scale=0.5]{};
        }
    }
\node at (0.5,0.5)[fill=black,circle,scale=.5]{};
\node at (0,1)[fill=white,circle,scale=.4]{};

\draw[->] (1-0.2,0)--(0.2,0)[];
\draw[->] (1,0.25)--(1,0.75)[];
\draw (0,0.25) arc (180:90:0.75);

\begin{scope}[shift={(-0.5,-0.5)}]
\node at (0.61,0.61)[scale=0.8]{$Y$};
\node at (0.89,0.89)[scale=0.8]{$X$};

\draw[thin](0.4,0.6)--(0.9,1.1)arc(135:-45:0.14)--(0.6,0.4)arc(-45:-225:0.141);
\end{scope}

\draw[dashed] (1.2,1)--(1.5,1);
\draw[dashed,<-] (1.2,0)--(1.5,0);
\draw[dashed] (0,-0.2)--(0,-0.5);
\draw[dashed,<-] (1,-0.2)--(1,-0.5);
\end{scope}
\node at (1.75,-1){$(b)$};
\end{scope}

\node at (1.75,-1){$(a)$};

\end{tikzpicture}
\end{center}
\caption{Lattice modification to create weight-2 single Majorana/hole operators. (a) The change in edge operators (b) the Majorana/hole operators on the new corner sites. For a corner face where the arrows are all pointing in the other direction, the action of the new edge operator and the new Majorana operators on the vertex qubits will be $X$.}
\label{fig:CornerShaving}
\end{figure}
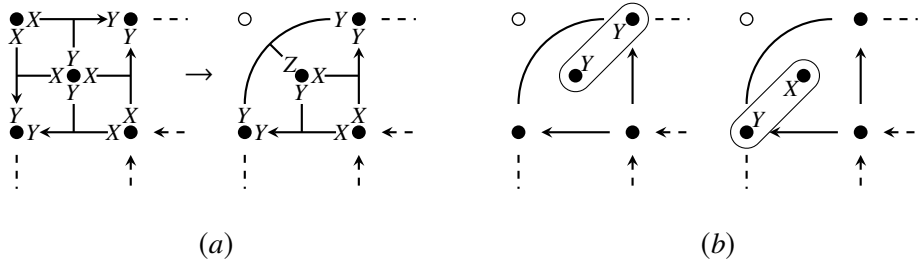

\subsection{Partial Correction of Detectable $\mathbf X$ and $\mathbf Y$ Errors}

In both the encodings detailed in this paper, $X$ and $Y$ errors on primary sites are detectable---but they are not distinguishable as they differ only by $Z$, itself an undetectable logical error.
This means that neither error is correctable since it is impossible to know which correction to apply and the application of the wrong correction leads to a $Z$ error.
Normally this would mean that the codes do not lend themselves to active error correction throughout a circuit run.

However, one can disregard the distinction between $X$ or $Y$ errors and apply a random correction---i.e.\ an $X$ or $Y$.
This yields a $50\%$ chance that the error is corrected; and otherwise the error will be mapped to an undetectable $Z$ error which maps onto natural fermionic phase noise (cf.\ \cref{sec:natural-noise}).
Together with the fact that all single qubit Pauli errors on auxiliary qubits are distinguishable in both the VC and \DK{} encodings, this means that active error correction \emph{can} be used for all single qubit errors, at the expense of introducing additional phase noise on the simulated fermionic system.

\section{Periodic Boundary Conditions}\label{sec:periodic}
As already discussed in \cref{sec:ParityError}, boundary effects require special attention: they introduce weight one Majorana or Majorana hole operators, and furthermore introduce error configurations for which some of the stabilizer violations from \cref{fig:weight-3 syndromes} can cancel, if two or more errors are present.

Much like the surface code emerged from Kitaev's Toric code, we can \emph{also} introduce periodic boundary conditions for the \DK{} encoding.
This yields a uniform periodic lattice if and only if the number of faces (which equals the number of lattice vertices) is even in both directions.
Given such a torus, and by inspection of the error syndromes in \cref{fig:weight-3 syndromes}, the following claim is immediately evident.
\begin{proposition}
On a torus with an even face count $\ge 4$ in both directions, any two weight-1 Pauli errors are either completely detectable; a combination of at most one detectable error and one fermionic phase noise term; or two weight-1 fermionic phase noise terms.
\end{proposition}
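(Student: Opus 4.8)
The plan is to combine the classification of undetectable weight-1 errors from \cref{lem:\DK{} weight-1} with a direct reading of the syndrome table in \cref{fig:weight-3 syndromes}. First I would observe that a torus has no lattice corners, so the ``Majorana'' branch of \cref{lem:\DK{} weight-1} is vacuous: the only undetectable weight-1 Pauli errors are the $Z$ operators on vertex qubits, each encoding the mode-weight-1 fermionic phase term $-\ii\c_j\c*_j$, while every other weight-1 Pauli (any Pauli on a face qubit, and $X$ or $Y$ on a vertex qubit) has a nonempty syndrome and is detectable. The hypothesis of $\ge 4$ faces in each direction already enters here: it guarantees that none of the small syndrome patterns of \cref{fig:weight-3 syndromes} collapses to the empty set through wrap-around, so no \emph{new} undetectable weight-1 errors appear on the torus.

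Next I would run the argument over a pair of errors, writing the combined error as $E_1 E_2$ with each $E_i$ a weight-1 Pauli. If $E_1$ and $E_2$ act on the same qubit, then $E_1 E_2$ is $\1$ or, up to phase, a single Pauli on that qubit, which is already of the desired form, so I may assume otherwise. If both $E_i$ are $Z$ on vertex qubits, then $E_1 E_2 \propto Z_i Z_j$ is a product of two mode-weight-1 fermionic phase-noise terms. If exactly one $E_i$ is a vertex $Z$ and the other is detectable, then---since the vertex $Z$ has trivial syndrome---$E_1 E_2$ inherits the nonempty syndrome of the detectable factor, so it is a single detectable error combined with a single phase-noise term. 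The remaining case is that both $E_i$ are detectable; here $E_1 E_2$ is undetectable precisely when $\mathrm{syn}(E_1) = \mathrm{syn}(E_2)$, and the task reduces to showing that this equality forces $E_1 E_2$ to be $\1$ or a single phase-noise term.

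For that case I would read off from \cref{fig:weight-3 syndromes} that the syndrome of a detectable weight-1 error is a set of loop stabilizers tightly localised at the error: the two even faces diagonally opposite across a vertex, for $X$ or $Y$ on that vertex; the two even faces immediately above and below an odd face, for $X$ on that face qubit; the two immediately to its left and right, for $Y$; and all four of its neighbouring even faces, for $Z$. Thus a four-element syndrome forces both $E_i$ to be $Z$ on the unique odd face at the centre of that ``diamond'', whence $E_1 E_2 = \1$; a ``diagonal'' two-element syndrome forces both $E_i$ to be $X$ or $Y$ on the unique vertex common to those two faces, whence $E_1 E_2 \propto Z_v$, a phase-noise term; and a ``collinear'' two-element syndrome forces both $E_i$ to be $X$ (respectively $Y$) on the unique odd face enclosed by the pair. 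In every subcase $E_1 E_2$ is $\1$ or $\propto Z_v$, as required.

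The step I expect to be the real obstacle is justifying the uniqueness assertions of the previous paragraph, since they are exactly what can fail on a torus that is too small: on a torus of height exactly $4$, an even-face pair two rows apart in a fixed column is enclosed by \emph{two} distinct odd faces, so the $X$ errors on those two face qubits share a loop-stabilizer syndrome and their product is a weight-2 operator that is not phase noise. The assumption of $\ge 4$ faces in each direction eliminates all such coincidences except this extremal one; for tori of size exactly $4$ in a direction one closes the remaining gap by additionally invoking the non-contractible (global) loop-operator stabilizers of the torus---which are genuine stabilizers of the encoding, since the fermionic loop operator equals the identity for every cycle, contractible or not (cf.\ \cref{eq:req4})---and checking by a short computation that each of them anticommutes with exactly one of the two candidate face errors, hence detects their product. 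For $\ge 6$ faces in each direction no wrap-around is possible and the whole dichotomy is visible directly from \cref{fig:weight-3 syndromes}.
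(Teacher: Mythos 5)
Your argument is, at its core, the same syndrome-comparison argument the paper gives---the paper's proof is essentially two sentences asserting that no syndrome from \cref{fig:weight-3 syndromes} is truncated on the torus and that the only cancelling pair of syndromes is an $X$ and a $Y$ on the same vertex qubit, whose product is a logical $Z$, i.e.\ phase noise---but you carry the pairwise bookkeeping further, and in doing so you catch a borderline case that the paper's proof passes over in silence. When one direction of the torus has exactly four faces, the two even plaquettes in a face column form the syndrome of an $X$ error on \emph{either} of the two odd faces of that column (and likewise two $Y$ errors on the odd faces of a row of length four), so the product of those two face errors commutes with every plaquette stabilizer shown in \cref{fig:weight-3 syndromes}; a figure-level argument alone does not dispose of it. Your patch is legitimate within the paper's own framework: the fermionic loop operator of \cref{eq:req4} is proportional to the identity for \emph{any} cycle, contractible or not, so on the torus the non-contractible loop operators are stabilizers, and a short computation indeed shows that a loop of horizontal edge operators acts with $Y$ on exactly one of the two offending face qubits and hence anticommutes with their product. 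One small correction to your wording: it is not true that \emph{each} non-contractible loop detects this product---loops built from vertical edge operators act with $X$ on their adjacent face qubits and so commute with both $X$ face errors; only the loops winding perpendicular to the column containing the two errors (whose edge operators carry $Y$ on face qubits) do---but a single such loop suffices, so your conclusion stands. In short, what your route buys is a proof that remains honest at the extremal size ``face count exactly $4$'', at the cost of importing the global loop stabilizers, which the paper's proof never invokes; for $\ge 6$ faces in both directions the two arguments coincide, and the paper's brevity is then justified.
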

\begin{proof}
In the case of periodic boundary conditions, the only single-error syndromes are those illustrated in \cref{fig:weight-3 syndromes}, with none being truncated at the boundaries.
The only pairing of error syndromes in \cref{fig:weight-3 syndromes} that cancel are an $X$ and $Y$ error on the same vertex qubit, which simply corresponds to a $Z$ error on a vertex qubit, i.e.\ fermionic phase noise. 
\end{proof}

For a sufficiently large lattice with periodic boundary conditions, all weight-2 errors either correspond to natural simulated noise, or are detectable. Clearly then, in the absence of periodic boundaries, undetectable weight-2 errors that do not correspond to phase noise must only appear on the boundaries.

\section{Outlook and Discussion}
We see the features and techniques described in this work as being particularly relevant for near term quantum algorithms with no active error correction or fault tolerance \cite{EpsilonCircuits}. Since fermionic encodings are indispensable for fermionic simulation, they constitute a significant fixed overhead in representing any fermionic systems on NISQ devices. One may not be able to afford any additional overhead for supplementary error detection. In this context the time scale of any coherent quantum evolution would have to be upper bounded to ensure that the probability of an error is $\ll 1$. Individual runs might then be post-selected based on whether errors are detected.

Take for example a quantum simulation done by Trotterizing $\HFH$ from \cref{sec:lattice-models}~\cite{Lloyd1996}. This amounts to breaking up the time evolution
\begin{equation}\label{eq:trotter-circuit}
    \exp(\ii t \HFH) = \prod_{i} \exp(\ii t_i h_i) + \BigO(\epsilon)
\end{equation}
for some target error $\epsilon>0$.
Here the $h_i$ are the local Hamiltonian terms, i.e.\ the Pauli product operators that define the chosen encoding; and the $t_i=t_i(t)$ are times chosen such that the given product formula evolves the system under $\HFH$ for time $t$.

Suppose for the sake of argument that the error model is such that errors only occur between Trotter steps, and not in the circuit decomposition of these steps. Given that syndrome measurements are done by measuring stabilizers, which commute with all terms in the Hamiltonian, any sufficiently spatially distant weight-1 Pauli errors in the volume of the computation can be detected, and those computations may be post-selected away. However if two errors occur within the volume of the computation which cancel their respective syndromes, then these runs can not be post-selected, and will contribute to the overall expected accuracy of the computation. Naturally, the question arises how we might address these higher order errors within this framework.

A natural extension of the work presented here is to explore how higher Pauli-weight errors map under these encodings. In particular, if higher-weight errors also map to natural local fermionic noise, then it may be possible to similarly mitigate qubit noise to an even higher order. For example undetectable weight-2 operators in the \DK{} encoding correspond to boundary terms, the majority of which appear as paired hopping terms. This suggests a physical error model in which the simulated system is coupled to a superconducting system at the boundary.

One interesting feature of these results is that there is an inbuilt preference for a particular choice of weight-1 Pauli errors. Thus the work presented here may be especially applicable in cases were the hardware is already biased towards certain Pauli errors.

It is worth noting that if stabilizers are only measured at the end of a run, then, depending on the observables one is interested in measuring, the cost of some stablizer measurements may be significantly reduced, since they can be performed destructively and non-coherently. For example, if one is purely interested in measuring fermion density in the system, then the highly non-local parity operator can be measured simply by measuring every qubit in the $Z$ basis. Thus allowing one to detect the Majorana errors described in this work, without having to resort to any modifications of the encodings.

This work remains relevant even if one does have fault tolerance. Mitigating a large fraction of errors already one level above the error-correcting code would allow a reduction in overhead.

\section*{Acknowledgements}
The authors would like to thank Laura Clinton for helpful discussions.

\printbibliography

\appendix
\newcommand\kk{\mathbf k}
\newcommand{\q}{\mathbf q}
\newcommand{\x}{\mathbf x}
\newcommand{\N}{\Vec N}
\newcommand{\B}{\mathcal{B}}

\section{Derivation of Fermionic Phase Noise}\label{appendix:phaseNoise}
To illustrate the fact that phase noise can be considered natural fermionic noise, we present a derivation of fermionic phase noise for a natural fermionic system. Here we consider a fermionic lattice model coupled to a bosonic system, for example phonons on the lattice. The interaction is mediated through the transfer of momentum; a coupling
known as Fröhlich Hamiltonian \cite{Frohlich1954,Pavarini2017}, which reads
\begin{equation}
   H_\text{int}=\sum_{\kk,\sigma}\sum_\q g_\q a^\dagger_{\kk +\q ,\sigma}a_{\kk ,\sigma}\left(b_\q +b^\dagger_{-\q }\right) .
\end{equation}
Assume for simplicity a momentum-independent coupling $g_q=g:=1$.\footnote{A more sophisticated and realistic analysis could be performed for a momentum-dependent coupling, but we will not attempt this here. 
} A Fourier transform of the fermionic and bosonic operators yields
\begin{equation}
    H_\text{int} =  \sum_{\x} N_{\x} (b_\x + b_\x^\dagger),
\end{equation}
where $N_{\x}=\sum_{\sigma} N_{\x,\sigma}$  and   $N_{\x,\sigma}$ is the fermionic number operator for spin $\sigma\in\{\uparrow,\downarrow\}$ on site $\x$.
Given an interaction strength $\gamma$, the interaction unitary is $U_\text{int} = \ee^{-\ii \gamma H_{\text{int}}} .$

The effective channel on the spin-up fermionic system is then given by
\begin{equation}
    \Lambda(\rho_{\mathcal{F}\uparrow})=\text{Tr}_{\mathcal{B}, \mathcal{F}\downarrow}\left[U_\text{int}(\rho_{\mathcal{F}\uparrow}\otimes \bar{I}_{\mathcal{F}\downarrow}\otimes  \rho_\mathcal{B})U_\text{int}^\dagger\right].
\end{equation}
$\bar{I}_{\mathcal{F}\downarrow}$ is a maximally mixed state on the spin-down fermionic system,
and $\rho_\mathcal{B}$ is a finite temperature Gibbs state with inverse temperature $\beta$ on the bosonic system, which we assumed to be in a completely thermalized bath configuration.
Expanding in the Fock basis $\ket*{\vec{N}_B}$ of $\mathcal{B}$ and $\ket*{\vec{N}_F}$ of $\mathcal{F}_\downarrow$ we retrieve
\begin{equation*}
    \Lambda(\rho_{\mathcal{F}\uparrow})=\frac{1}{d_{\mathcal{F}\downarrow} }\sum_{\vec{N}_F,\vec{N}'_F} \sum_{\vec{N}_B,\vec{N}'_B} \frac{e^{-\beta \vert N_B'\vert }}{Z} \bra*{\vec{N}_B, \vec{N}_F}U_\text{int}\ket*{\vec{N}_B', \vec{N}_F'}\rho_{\mathcal{F}\uparrow}\bra*{\vec{N}_B', \vec{N}_F'}U_\text{int}^\dagger\ket*{\vec{N}_B, \vec{N}_F},
\end{equation*}
where $d_{\mathcal F\downarrow}$ denotes the dimension of the fermionic spin down subsystem.
We note that for a bosonic Fock state $\bra*{\vec{N}_B} H_{\text{int}} \ket*{\vec{N}_B} =0$, such that a second order expansion in $\gamma$ yields
\begin{align*}
    &\Lambda(\rho_{\mathcal{F}\uparrow})= \; \BigO(\gamma^3) + \rho_{\mathcal{F}\uparrow} \nonumber\\
    &+ \frac{\gamma^2}{d_{\mathcal{F}\downarrow} }\sum_{\vec{N}_F,\vec{N}'_F}\sum_{\vec{N}_B,\vec{N}'_B} \frac{e^{-\beta \vert N_B'\vert }}{Z}\bra*{\vec{N}_B, \vec{N}_F}H_\text{int}\ket*{\vec{N}_B', \vec{N}_F'}\rho_{\mathcal{F}\uparrow}\bra*{\vec{N}_B', \vec{N}_F'}H_\text{int}\ket*{\vec{N}_B, \vec{N}_F} \nonumber \\
    &- \frac{\gamma^2}{2d_{\mathcal{F}\downarrow} } \sum_{\vec{N}_F}\sum_{\vec{N}_B}\frac{e^{-\beta \vert N_B\vert }}{Z}\left(\bra*{\vec{N}_B, \vec{N}_F}H_\text{int}^2\ket*{\vec{N}_B, \vec{N}_F}\rho_{\mathcal{F}\uparrow} +\rho_{\mathcal{F}\uparrow} \bra*{\vec{N}_B, \vec{N}_F}H_\text{int}^2\ket*{\vec{N}_B, \vec{N}_F}\right).
\end{align*}
Expanding $H_\text{int}$ and isolating bosonic terms:
\begin{align}
    \Lambda(\rho_{\mathcal{F}\uparrow})=& \; \BigO(\gamma^3) + \rho_{\mathcal{F}\uparrow} \nonumber\\
    &+ \frac{\gamma^2}{d_{\mathcal{F}\downarrow} }\sum_{\vec{N}_F,\vec{N}'_F}\sum_{\x \x'} \Gamma_{\x\x'}(\beta) \bra*{ \vec{N}_F}N_\x\ket*{\vec{N}_F'}\rho_{\mathcal{F}\uparrow}\bra*{ \vec{N}_F'}N_{\x'}\ket*{\vec{N}_F} \nonumber \\
    &- \frac{\gamma^2}{2d_{\mathcal{F}\downarrow} } \sum_{\vec{N}_F}\sum_{\x\x'}\Omega_{\x\x'}(\beta)\bra*{ \vec{N}_F}N_\x N_{\x'}\ket*{ \vec{N}_F}\rho_{\mathcal{F}\uparrow} +\rho_{\mathcal{F}\uparrow} \bra*{ \vec{N}_F}N_\x N_{\x'}\ket*{\vec{N}_F}
\end{align}
where
$$\Gamma_{\x\x'}(\beta)=\sum_{\vec{N}_B,\vec{N}'_B} \bra*{\vec{N}_B}(b_\x+ b_\x^\dagger)\ket*{\vec{N}'_B}\bra*{\vec{N}_B'}(b_{\x'}+ b_{\x'}^\dagger)\ket*{\vec{N}_B} \frac{e^{-\beta \vert N_B'\vert }}{Z}  $$
and
\begin{align} \Omega_{\x\x'}(\beta) &= \sum_{\vec{N}_B} \bra*{\vec{N}_B} (b_\x + b_\x^\dagger)(b_{\x'}+b_{\x'}^\dagger) \ket*{\vec{N}_B}\frac{e^{-\beta \vert N_B\vert }}{Z} \\
&=\sum_{\vec{N}_B,\vec{N}'_B}\bra*{\vec{N}_B'}(b_{\x'}+ b_{\x'}^\dagger)\ket*{\vec{N}_B}\bra*{\vec{N}_B}(b_\x+ b_\x^\dagger)\ket*{\vec{N}'_B} \frac{e^{-\beta \vert N_B\vert }}{Z} \\
&= \Gamma_{\x\x'}(\beta).
\end{align}
Noting that $\Gamma_{\x\x'}(\beta) = \delta_{\x\x'} \Gamma_{\x\x}(\beta) = \Gamma(\beta)\delta_{\x\x'}$, we get
\begin{align}
    \Lambda(\rho_{\mathcal{F}\uparrow})= \;& \BigO(\gamma^3) + \rho_{\mathcal{F}\uparrow} \nonumber\\
    &+ \frac{\gamma^2}{d_{\mathcal{F}\downarrow} }\Gamma(\beta)\sum_{\vec{N}_F,\vec{N}'_F}\sum_{\x}  \bra*{ \vec{N}_F}N_\x\ket*{\vec{N}_F'}\rho_{\mathcal{F}\uparrow}\bra*{ \vec{N}_F'}N_{\x}\ket*{\vec{N}_F} \nonumber \\
    &- \frac{\gamma^2}{2d_{\mathcal{F}\downarrow} } \Gamma(\beta)\sum_{\vec{N}_F}\sum_{\x}\bra*{ \vec{N}_F}N_\x^2\ket*{ \vec{N}_F}\rho_{\mathcal{F}\uparrow} +\rho_{\mathcal{F}\uparrow} \bra*{ \vec{N}_F}N_\x^2\ket*{\vec{N}_F}.
\end{align}
Given that we only work in a fermionic Fock basis $\ket*{\vec N_F}$ on the spin-down sector, we have
\begin{equation}
    \bra*{ \vec{N}_F}N_\x\ket*{\vec{N}_F'} = (N_{\x,\uparrow}+ [\vec{N}_F]_\x) \delta_{N_F, N_F'}
\end{equation}
where $[\vec{N}_F]_\x$ is the $\x^\text{th}$ element of the vector $\vec{N}_F$.
We therefore retrieve
\begin{align}
    \Lambda(\rho_{\mathcal{F}\uparrow})= \; &\BigO(\gamma^3) + \rho_{\mathcal{F}\uparrow} \nonumber\\
    &+ \frac{\gamma^2}{d_{\mathcal{F}\downarrow} }\Gamma(\beta)\sum_{\vec{N}_F}\sum_{\x} (N_{\x,\uparrow}+ [\vec{N}_F]_\x)\rho_{\mathcal{F}\uparrow}(N_{\x,\uparrow}+ [\vec{N}_F]_\x) \nonumber \\
    &- \frac{\gamma^2}{2d_{\mathcal{F}\downarrow} } \Gamma(\beta)\sum_{\vec{N}_F}\sum_{\x}(N_{\x,\uparrow}+ [\vec{N}_F]_\x)^2\rho_{\mathcal{F}\uparrow} +\rho_{\mathcal{F}\uparrow}(N_{\x,\uparrow}+ [\vec{N}_F]_\x)^2.
\end{align}
Noting that $\sum_{\vec{N}_F} [\vec{N}_F]_\x = d_{\mathcal{F}\downarrow}/2$ (as we sum over all possible configurations; and a mode is occupied half of the time) and $\sum_{\vec{N}_F} (N_{\x \uparrow}+ [\vec{N}_F]_\x)^2 = d_F (2N_{\x \uparrow} +1/2) $
and expanding terms yields
\begin{align}
    \Lambda(\rho_{\mathcal{F}\uparrow})=& \; \BigO(\gamma^3) + \rho_{\mathcal{F}\uparrow} \nonumber\\
    &+ \gamma^2\Gamma(\beta) \sum_{\x}  N_{\x,\uparrow}\rho_{\mathcal{F}\uparrow}N_{\x,\uparrow}\nonumber + \frac{1}{2}(N_{\x,\uparrow}\rho_{\mathcal{F}\uparrow} + \rho_{\mathcal{F}\uparrow}  N_{\x,\uparrow} ) +\frac{1}{2} \rho_{\mathcal{F}\uparrow} \nonumber \\
    &- \frac{\gamma^2\Gamma(\beta)}{2} \sum_{\x}(2N_{\x \uparrow} +1/2) \rho_{\mathcal{F}\uparrow} +\rho_{\mathcal{F}\uparrow}(2N_{\x \uparrow} +1/2)
    \nonumber\\
    =& \; \BigO(\gamma^3) + \rho_{\mathcal{F}\uparrow}
    +\gamma^2\Gamma(\beta) \sum_{\x}\left(  N_{\x,\uparrow}\rho_{\mathcal{F}\uparrow}N_{\x,\uparrow}- \frac{1}{2}(N_{\x,\uparrow}\rho_{\mathcal{F}\uparrow} + \rho_{\mathcal{F}\uparrow}  N_{\x,\uparrow})\right).
\end{align}
Now consider the phase operator, defined as $\phi_{\x \uparrow} = (1 - 2 N_{\x \uparrow})$.
Then
\begin{equation}
     \sum_{\x} \left( N_{\x,\uparrow}\rho_{\mathcal{F}\uparrow}N_{\x,\uparrow}- \frac{1}{2}(N_{\x,\uparrow}\rho_{\mathcal{F}\uparrow} + \rho_{\mathcal{F}\uparrow}  N_{\x,\uparrow})\right) = \frac{1}{4}  \sum_\x \left(\phi_{\x \uparrow} \rho_{\mathcal{F}\uparrow}  \phi_{\x \uparrow} - \rho_{\mathcal{F}\uparrow}\right)
\end{equation}
and thus
\begin{align}
    \Lambda(\rho_{\mathcal{F}\uparrow})=& \; \left(1- \frac{\gamma^2 \Gamma(\beta)M}{4}\right)\rho_{\mathcal{F}\uparrow}
    +\frac{\gamma^2 \Gamma(\beta)M}{4} \sum_{\x}  \frac{1}{M}\phi_{\x,\uparrow}\rho_{\mathcal{F}\uparrow}\phi_{\x,\uparrow},
\end{align}
where $M$ is the number of modes.

\end{document}